\documentclass[journal]{IEEEtran}
\usepackage{amsmath,amssymb,amsfonts,bm}
\usepackage{graphicx}
\usepackage{algorithm}
\usepackage{algorithmic}

\newcommand{\vect}[1]{\mathbf{#1}}
\def\diag{\mathrm{diag}}
\def\tr{\mathrm{tr}}
\def\Htran{\mbox{\tiny $\mathrm{H}$}}
\def\Ttran{\mbox{\tiny $\mathrm{T}$}}
\def\CN{\mathcal{N}_{\mathbb{C}}}

\newcommand{\E}{\mathbb{E}}
\newcommand{\blkdiag}{\mathrm{blkdiag}}
\newtheorem{remark}{Remark}
\newtheorem{lemma}{Lemma}
\newtheorem{proposition}{Proposition}

\begin{document}
\title{Energy-Efficient Integrated Access and Fronthaul for Cell-Free Massive MIMO with Adaptive Quantization Resolution}
\author{\"Ozlem Tu\u{g}fe Demir  
 
\thanks{  \"O. T. Demir is with the Department of Electrical and Electronics Engineering, Bilkent University, Ankara, T\"urkiye (ozlemtugfedemir@bilkent.edu.tr).    \newline \indent
This work was supported by The Scientific and Technological Research Council of Türkiye (TUBITAK) BIDEB
2232-B International Fellowship for Early Stage Researchers under Grant Number 122C149. This work was supported by the BAGEP Award of the
Science Academy //with funding supplied by Özgül ve Armağan Çağlayan Vakfı.// 
}\vspace{-7mm}
}
\maketitle

\begin{abstract}
Cell-free massive MIMO with wireless fronthaul is a promising architecture for
energy-efficient 6G networks, but the access and fronthaul links must then
share the same scarce spectrum, and, under the fully centralized (option-8)
functional split, the fronthaul rate is dictated by the finite quantization
resolution used at the access points (APs). This paper develops a network
energy-efficiency (EE) maximization framework for the uplink of such a system,
jointly optimizing the integrated access and fronthaul (IAF) resource split,
the adaptive per-AP quantization resolution, and the fronthaul powers, and
treating the time-division (TD) and frequency-division (FD) operating modes in
a unified manner. Each AP may be switched off (put to sleep) when it is not
worth activating, so the resolution allocation is inherently coupled with AP
selection. The resulting mixed-integer, nonconvex fractional program is solved
by an alternating-optimization algorithm with per-block optimality
guarantees---a closed-form optimal time split, bandwidth bisection, and optimal
per-AP bit selection---that applies verbatim to both modes. While the design
relies on the tractable additive quantization noise model, the reported
performance is obtained end-to-end with the actual Lloyd--Max quantizers and a
Bussgang decomposition-based achievable-rate bound. 
\end{abstract}

\vspace{-4mm}
\section{Introduction}

Cell-free massive multiple-input multiple-output (MIMO) has emerged as a
key enabler for beyond-5G and 6G networks, owing to its ability to deliver
uniformly high data rates across the entire coverage area
\cite{ngo2017cellfree,demir2021foundations,chowdhury2020_6g}. By distributing a large number
of access points (APs) over a wide area and having them jointly serve the
user equipments (UEs) under the coordination of a central processing unit
(CPU), cell-free massive MIMO removes the cell boundaries of conventional
co-located deployments and provides macro-diversity, so that even UEs in
unfavorable propagation conditions enjoy reliable service
\cite{ngo2017cellfree,demir2021foundations}. This architecture is naturally
aligned with the disaggregated Open RAN vision, in which the physical-layer
processing is functionally split between the distributed radio units (the
APs) and a centralized baseband unit (the CPU) \cite{topal2026unlocking}.

Realizing this vision at scale hinges on the fronthaul that connects the APs
to the CPU. Under the fully centralized functional split (option~8), each AP
forwards the digitized in-phase/quadrature (I/Q) samples of all its antennas
to the CPU, where channel estimation and receive combining are performed
\cite{topal2026unlocking}. Provisioning a dense layer of APs with dedicated
optical fronthaul is, however, costly and inflexible, which motivates
\emph{wireless} fronthaul as a scalable and rapidly deployable alternative
\cite{topal2026unlocking,madapatha2020iab,topal2024eewireless,demir2026hwimpaired,demirhan2022enabling}.
A range of wireless- and capacity-limited fronthaul designs has recently been
investigated by several groups, including mmWave wireless fronthaul
\cite{demirhan2022enabling}, capacity-constrained fronthaul
\cite{femenias2019cellfree}, amplify-and-forward relaying under hardware
impairments \cite{demir2026hwimpaired}, and over-the-air fronthaul signaling
that computes the CPU's sufficient statistics directly over the air
\cite{shaik2025ota}.
Wireless fronthaul, in turn, must
share the scarce radio spectrum with the access links and delivers a rate
that varies with the allocated bandwidth and the instantaneous link
conditions. This tight coupling gives rise to an \emph{integrated
access and fronthaul} (IAF) problem, in which the radio resources---time,
bandwidth, and transmit power---must be jointly apportioned between the
access and fronthaul links, in the same spirit as integrated access and
backhaul (IAB) for cellular networks
\cite{madapatha2020iab}.

The IAF architecture studied here is a fronthaul-domain counterpart of
IAB, a concept standardized by 3GPP
(Rel-16/17) and extensively studied as a means of densifying cellular
networks without provisioning wired backhaul to every node
\cite{madapatha2020iab,polese2020iab}. The core idea of IAB is to reuse the
same radio spectrum and hardware for the access and backhaul links, so that a
node relays its traffic wirelessly toward a fiber-connected donor; this
sharply lowers deployment cost and enables rapid, flexible densification, at
the price of having to jointly schedule and allocate the shared
time--frequency--power resources between the two link types
\cite{polese2020iab}. These pressures are even more acute in cell-free massive
MIMO, where a large number of APs must be connected to the CPU and equipping
every AP with fiber is neither cost-effective nor scalable. This has motivated
recent extensions of the IAB principle to cell-free systems, in which the
access and wireless fronthaul/backhaul links share the spectrum and the
bandwidth split and beamformers are optimized to maximize the end-to-end rate
\cite{jazi2023iab}. However, such works target the spectral efficiency and do not
address the finite-resolution fronthaul quantization, the adaptive per-AP bit
allocation, or the network energy efficiency (EE) that are central to this paper.

A distinctive feature of the split-option-8 fronthaul is that the AP output
is \emph{quantized} to a finite resolution before being conveyed to the CPU,
so that the fronthaul rate is governed directly by the number of
quantization bits per sample \cite{masoumi2020performance,khorsandmanesh2023optimized}.
Coarser quantization lowers the fronthaul load at the expense of additional
distortion, creating a fundamental resolution--distortion trade-off that has
been investigated for multi-user MIMO \cite{khorsandmanesh2023optimized}
and for cell-free systems with limited-capacity fronthaul
\cite{masoumi2020performance,kim2024meta}. Because the wireless-fronthaul
rate fluctuates with the allocated resources, the number of bits that each AP
can afford to transmit is not fixed but must itself be optimized: assigning
the available bits \emph{unevenly} across APs, according to their relative
importance, uses the fronthaul far more efficiently than uniform allocation
\cite{kim2024meta,demir2026quantization}. To render such bit-allocation
designs analytically tractable, the quantization distortion is commonly
abstracted through the additive quantization noise model (AQNM)
\cite{fletcher2007robust} or, more accurately, through the Bussgang
decomposition \cite{demir2020bussgang}.

These considerations become especially pressing in wide-band deployments such
as the emerging upper mid-band  spectrum \cite{bazzi2025upper}, where hundreds of
megahertz of bandwidth are available but the thermal-noise power grows in
proportion to the utilized bandwidth. On the one hand, a
larger bandwidth raises the pre-log factor of the achievable rate; on the
other hand, it raises the noise floor and the bandwidth-proportional circuit
power, and it forces the fronthaul to convey far more samples per second. In
this regime, operating each AP at a \emph{low} quantization resolution---and
switching lightly loaded APs off altogether---is frequently the
energy-optimal choice, which makes joint bandwidth, bit, and sleep-mode
control indispensable for the network EE
\cite{lopezperez2021energy,enqvist2024fundamentals,enqvist2026sleep}.
Indeed, EE has become a first-order design objective for
sustainable 6G networks \cite{chowdhury2020_6g}, and a growing body of work
optimizes the EE of cell-free massive MIMO with wireless
fronthaul through access-point activation, sleep modes, and joint
radio--fronthaul--cloud orchestration
\cite{topal2026unlocking,topal2024eewireless,demir2024oran}.
Recent 3GPP discussions on next-generation MIMO for 6G point in the same
direction, emphasizing the reduction of power consumption and signaling
overhead by muting or deactivating antenna ports, as reflected in technical
documents from the 3GPP 6G Workshop held in March~2025
\cite{Nokia_6GWS,Ericsson_6GWS}; these reports mark a shift in the key
performance indicators from a sole focus on spectral efficiency (raw
throughput) toward EE \cite{kosasih_balkancom}.

Despite the rich literature on cell-free massive MIMO and on fronthaul
quantization, the \emph{joint} design of the IAF resource split and the
adaptive fronthaul resolution for a wireless-fronthaul cell-free system
operating over wide bands has, to the best of our knowledge, not been
addressed. Existing works either fix the fronthaul rate or bit budget and
optimize only the quantization/precoding
\cite{masoumi2020performance,kim2024meta},
or orchestrate the radio resources without a distortion-aware
EE objective \cite{topal2026unlocking}. In this paper, we
bridge this gap. Our main contributions are summarized as follows.

\begin{itemize}
  \item \textbf{Unified IAF framework.} We formulate the uplink of a
  wireless-fronthaul cell-free massive MIMO network in which the access and
  fronthaul links share a common band, and we treat the time-division (TD)
  and frequency-division (FD) operating modes within a single optimization
  framework, exposing their complementary time-versus-bandwidth trade-offs.

  \item \textbf{Adaptive per-AP quantization with sleep modes.} Each AP is
  assigned an adaptive data resolution $b_l \in \{0,1,\dots,b_{\max}\}$,
  where $b_l = 0$ places the AP into sleep, and these resolutions are jointly
  optimized together with the access/fronthaul time split, the per-link
  bandwidths, and the fronthaul transmit powers under a realistic power model
  that captures bandwidth- and rate-proportional processing as well as
  sleep-mode savings.

  \item \textbf{EE maximization algorithm.} The resulting
  network EE-maximization is a mixed-integer, nonconvex fractional program;
  we develop an alternating-optimization algorithm with per-block optimality
  guarantees---including a closed-form optimal time split, bandwidth
  bisection, and optimal per-AP bit selection---that applies verbatim to both
  the TD and FD modes.

  \item \textbf{Distortion-aware, end-to-end validation.} While the design
  relies on the tractable AQNM statistics, we assess the delivered
  performance in an end-to-end manner using the actual Lloyd--Max quantizers
  and a Bussgang decomposition-based (use-and-then-forget) achievable-rate bound, thereby
  accounting for the true, input-dependent quantization errors beyond the
  AQNM assumptions.

\end{itemize}

\vspace{-2mm}

\section{System Model and Problem Formulation}\label{sec:model}

\subsection{Network and Integrated Access--Fronthaul Frame Structure}
\label{subsec:frame}
We consider the uplink of a cell-free massive MIMO network in which $L$ APs, each equipped with $N$ antennas, jointly serve $K$ single-antenna UEs under centralized operation. The total number of AP antennas is $M = LN$. The APs are connected to the CPU via \emph{wireless} fronthaul links, and functional split option~8 is adopted: each AP forwards the digitized (quantized) baseband I/Q samples of all its antennas to the CPU, where channel estimation and receive combining are carried out \cite{topal2026unlocking}.

The access and fronthaul links share the same total bandwidth $B$, and we study and compare two integrated access--fronthaul operating modes that differ in how the radio resources are shared between the two links. Each resource-allocation frame has (normalized) unit duration, and $t_1$ and $t_2$ denote the fractions of the frame during which the access and fronthaul links are active, respectively:
\begin{itemize}
    \item \emph{Time-division (TD) mode:} the two links are separated in time to avoid cross-link interference. The frame is divided into an \emph{access phase} of duration $t_1$, during which the UEs transmit uplink pilot and data signals to the APs over a bandwidth $B_1 \leq B$, and a \emph{fronthaul phase} of duration $t_2 = 1 - t_1$, during which the APs forward their buffered quantized I/Q samples to the CPU over a bandwidth $B_2 \leq B$. Since the two phases never overlap in time, both $B_1$ and $B_2$ can individually be as large as the full band, and the frame is fully utilized, i.e., $t_1 + t_2 = 1$.
    \item \emph{Frequency-division (FD) mode:} the band is partitioned into two non-overlapping subbands with $B_1 + B_2 \leq B$, so that the access and fronthaul links operate \emph{simultaneously} without cross-link interference. The activity factors are then decoupled: the access link is active during a fraction $t_1 \leq 1$ of the frame and the fronthaul link during a fraction $t_2 \leq 1$, while each radio interface is put to sleep for the remaining time to save energy.
\end{itemize}
In both modes, the noise power scales with the utilized bandwidth, i.e., the noise variances over the access and fronthaul links are
\begin{equation}
    \sigma_1^2 = N_0 B_1, \qquad \sigma_2^2 = N_0 B_2,
\end{equation}
where $N_0$ is the noise power spectral density (including the noise figure). The two modes exhibit complementary trade-offs that are investigated throughout the paper: in the TD mode, each link may exploit the entire band but the links compete for transmission time, whereas in the FD mode, each link may transmit during the entire frame but the links compete for bandwidth, and shrinking the activity factors saves energy through sleep at the expense of throughput and fronthaul capacity. In both modes, enlarging $B_1$ (or $B_2$) increases the pre-log factor of the corresponding rate but raises the noise floor and the bandwidth-dependent circuit power.

\vspace{-2mm}
\subsection{Uplink Access Phase}
The time-frequency resources of the access phase are divided into coherence blocks of $\tau_c$ symbols, of which $\tau_p$ symbols are used for uplink pilots and $\tau_u = \tau_c - \tau_p$ for uplink data \cite{demir2021foundations}. The channel between UE $k$ and AP $l$ is denoted by $\vect{h}_{lk} \in \mathbb{C}^{N}$ and is modeled as correlated Rayleigh fading,
\begin{equation}
    \vect{h}_{lk} \sim \CN\left(\vect{0}, \vect{R}_{lk}\right),
\end{equation}
where $\vect{R}_{lk} \in \mathbb{C}^{N \times N}$ is the spatial correlation matrix and $\beta_{lk} = \tr(\vect{R}_{lk})/N$ is the large-scale fading coefficient. The collective channel of UE $k$ is $\vect{h}_k = [\vect{h}_{1k}^{\Ttran} \, \cdots \, \vect{h}_{Lk}^{\Ttran}]^{\Ttran} \in \mathbb{C}^{M}$.

During the uplink data transmission, the received signal at AP $l$ is
\begin{equation}
    \vect{y}_l = \sum_{i=1}^{K} \sqrt{p_i}\, \vect{h}_{li} s_i + \vect{n}_l \in \mathbb{C}^{N},
    \label{eq:received_signal}
\end{equation}
where $s_i$ is the unit-power data symbol of UE $i$, $p_i$ is its transmit power---fixed at the maximum value $p_{\max}$ throughout this work (see Remark~\ref{rem:uepower})---and $\vect{n}_l \sim \CN(\vect{0}, \sigma_1^2 \vect{I}_N)$ is the additive noise with the bandwidth-dependent variance $\sigma_1^2 = N_0 B_1$.

\subsection{Fronthaul Quantization With Adaptive Per-AP Resolution}
\label{subsec:quantization}
Under split option~8, AP $l$ quantizes each real dimension (I and Q) of every antenna's baseband samples and buffers the resulting bit stream for transmission during the fronthaul phase. To decouple the channel estimation quality from the data-quantization design, the pilot samples are quantized with a \emph{fixed} resolution of $b_{\rm p}$ bits (common to all APs), whereas the data samples are quantized with an \emph{adaptive} per-AP resolution of $b_l$ bits, which is a key optimization variable of this work. The data resolutions take values $b_l \in \{0, 1, \ldots, b_{\max}\}$, where $b_l = 0$ indicates that AP $l$ is \emph{switched off} (put to sleep) for the entire frame: it neither receives during the access phase nor forwards any samples, and it is excluded from the centralized processing and from the fronthaul transmission, consuming only the sleep power specified in Section~\ref{subsec:power_model}. We denote the set of active APs by $\mathcal{A} = \{ l : b_l \geq 1 \}$ and its cardinality by $L_{\mathcal{A}} = |\mathcal{A}|$. For the analytical modeling and optimization, we adopt the AQNM \cite{fletcher2007robust}, under which the quantized data signal at AP $l$ is
\begin{equation}
    \check{\vect{y}}_l = \left(1 - \eta_l\right) \vect{y}_l + \vect{q}_l,
    \label{eq:aqnm}
\end{equation}
where $\eta_l \in (0,1)$ is the distortion factor of AP $l$ and the quantization distortion $\vect{q}_l$ is uncorrelated with $\vect{y}_l$ and has the diagonal covariance matrix
\begin{equation}
    \E\left\{\vect{q}_l \vect{q}_l^{\Htran}\right\} = \eta_l \left(1 - \eta_l\right) \diag\left(\vect{R}_{\vect{y},l}\right),
\end{equation}
with $\vect{R}_{\vect{y},l} = \E\{\vect{y}_l \vect{y}_l^{\Htran}\}=\sum_{i=1}^Kp_i\vect{R}_{li}+\sigma_1^2\vect{I}_N$. For a Lloyd--Max scalar quantizer with resolution $b_l > 5$ bits, the distortion factor is accurately approximated by high-rate quantization theory as \cite{gray1998quantization}
\begin{equation}
    \eta_l \approx c_q\, 2^{-2 b_l}, \qquad c_q = \frac{\sqrt{3}\pi}{2},
    \label{eq:distortion_factor}
\end{equation}
while for $b_l \leq 5$ the exact tabulated Lloyd--Max distortion factors are used \cite[Tab.~4.1]{jain1989fundamentals}. After compensating for the deterministic scaling $(1-\eta_l)$ at the CPU, the effective signal available for centralized processing is
\begin{equation}
    \tilde{\vect{y}}_l = \frac{\check{\vect{y}}_l}{1 - \eta_l} = \vect{y}_l + \vect{e}_l, \qquad
    \vect{e}_l = \frac{\vect{q}_l}{1 - \eta_l},
\end{equation}
where the normalized distortion $\vect{e}_l$ has the covariance
\begin{equation}
    \vect{C}_{\vect{e},l} = \frac{\eta_l}{1 - \eta_l}\, \diag\left(\vect{R}_{\vect{y},l}\right).
    \label{eq:distortion_cov}
\end{equation}

\emph{Fronthaul load:} Since the sampling rate is proportional to the access bandwidth $B_1$ and two real samples are generated per complex baseband sample per antenna, and since split option~8 forwards the raw antenna-domain observations of both the pilot and data portions of each coherence block, the number of bits that AP $l$ must deliver to the CPU per frame is
\begin{equation}
    F_l = 2 N B_1 t_1 \left( \frac{\tau_p}{\tau_c} b_{\rm p} + \frac{\tau_u}{\tau_c} b_l \right) \quad \text{[bits/frame]},
    \label{eq:fronthaul_load}
\end{equation}
where the first term is the fixed pilot-forwarding overhead and the second term is the adaptive data load.

\subsection{Quantization-Aware Channel Estimation}
\label{subsec:channel_estimation}
Since the pilot observations also traverse the fronthaul quantizers, channel estimation at the CPU is affected by the quantization distortion. During the pilot phase, UE $k$ transmits a pilot sequence $\bm{\phi}_k \in \mathbb{C}^{\tau_p}$ with $\|\bm{\phi}_k\|^2 = \tau_p$, and we let $\mathcal{P}_k \subset \{1,\ldots,K\}$ denote the set of UEs sharing the pilot of UE $k$ (including itself). After descaling and correlating the quantized pilot signal with $\bm{\phi}_k$, the CPU obtains
\begin{equation}
    \tilde{\vect{z}}_{lk} = \sum_{i \in \mathcal{P}_k} \sqrt{p_i \tau_p} \vect{h}_{li} + \tilde{\vect{n}}_{lk} + \tilde{\vect{e}}_{lk},
\end{equation}
where $\tilde{\vect{n}}_{lk} \sim \CN(\vect{0}, \sigma_1^2 \vect{I}_N)$ and $\tilde{\vect{e}}_{lk}$ is the accumulated quantization distortion, which under AQNM is uncorrelated across pilot samples and has the covariance
\begin{equation}
    \vect{C}_{\vect{e},l}^{\rm p} = \frac{\eta_{\rm p}}{1-\eta_{\rm p}}\, \diag\Bigg(\sum_{i=1}^{K} p_i \vect{R}_{li} + \sigma_1^2 \vect{I}_N\Bigg)
\end{equation}
computed from the pilot-phase received-signal statistics, where $\eta_{\rm p} \approx c_q 2^{-2 b_{\rm p}}$ is the distortion factor of the fixed pilot resolution.\footnote{Since the pilot resolution $b_{\rm p}$ is fixed and common to all APs, the channel estimates and their error statistics do not depend on the optimization variables $\{b_l\}$, which keeps the estimation stage decoupled from the adaptive data-quantization design. Under the AQNM, the quantization distortion is modeled as temporally and spatially uncorrelated, which renders the LMMSE estimator below tractable; the temporal correlation of the true quantization errors over the pilot sequence is captured in the numerical evaluation through the end-to-end Bussgang-based processing described in Remark~\ref{rem:bussgang}.} The quantization-aware linear minimum mean-squared error (LMMSE) estimate of $\vect{h}_{lk}$ is then
\begin{equation}
    \hat{\vect{h}}_{lk} = \sqrt{p_k\tau_p}\, \vect{R}_{lk} \bm{\Psi}_{lk}^{-1} \tilde{\vect{z}}_{lk},
    \label{eq:lmmse}
\end{equation}
where
\begin{equation}
    \bm{\Psi}_{lk} = \E\left\{\tilde{\vect{z}}_{lk}\tilde{\vect{z}}_{lk}^{\Htran}\right\} = \sum_{i \in \mathcal{P}_k} p_i \tau_p\, \vect{R}_{li} + \sigma_1^2 \vect{I}_N + \vect{C}_{\vect{e},l}^{\rm p}.
\end{equation}
The estimate and the estimation error $\tilde{\vect{h}}_{lk} = \vect{h}_{lk} - \hat{\vect{h}}_{lk}$ are uncorrelated, with error covariance
\begin{equation}
    \vect{C}_{lk} = \vect{R}_{lk} - p_k \tau_p\, \vect{R}_{lk} \bm{\Psi}_{lk}^{-1} \vect{R}_{lk}.
\end{equation}

\subsection{Centralized Combining and Achievable Spectral Efficiency}
\label{subsec:rate}
Collecting the descaled quantized observations of all APs in $\tilde{\vect{y}} = [\tilde{\vect{y}}_1^{\Ttran} \, \cdots \, \tilde{\vect{y}}_L^{\Ttran}]^{\Ttran} \in \mathbb{C}^{M}$, the CPU applies the receive combining vector $\vect{v}_k \in \mathbb{C}^{M}$ to detect the data of UE $k$: $\hat{s}_k = \vect{v}_k^{\Htran} \tilde{\vect{y}}$. The combining vectors follow the convention that the entries corresponding to switched-off APs are set to zero, so that only the observations of the active APs contribute to the detection. Following the standard centralized-operation bound for cell-free massive MIMO \cite[Sec.~5]{demir2021foundations}, extended to include the quantization distortion, an achievable spectral efficiency (SE) of UE $k$ is
\begin{equation}
    \mathsf{SE}_k = \frac{\tau_u}{\tau_c}\, \E\left\{ \log_2\left(1 + \mathsf{SINR}_k \right)\right\} \quad \text{[bit/s/Hz]},
    \label{eq:SE}
\end{equation}
where the expectation is with respect to the channel estimates and the effective instantaneous signal-to-interference-plus-noise ratio (SINR) is
\begin{equation}
    \mathsf{SINR}_k = \frac{p_k \left| \vect{v}_k^{\Htran} \hat{\vect{h}}_k \right|^2}{\sum\limits_{i=1, i \neq k}^{K} p_i \left| \vect{v}_k^{\Htran} \hat{\vect{h}}_i \right|^2 + \vect{v}_k^{\Htran} \vect{Z} \vect{v}_k},
    \label{eq:SINR}
\end{equation}
with $\hat{\vect{h}}_k = [\hat{\vect{h}}_{1k}^{\Ttran} \, \cdots \, \hat{\vect{h}}_{Lk}^{\Ttran}]^{\Ttran}$ and the covariance matrix of the effective noise (channel estimation error, quantization distortion, and thermal noise)
\begin{equation}
    \vect{Z} = \sum_{i=1}^{K} p_i\, \vect{C}_i + \vect{C}_{\vect{e}} + \sigma_1^2 \vect{I}_M,
\end{equation}
where $\vect{C}_i = \blkdiag\left(\vect{C}_{1i}, \ldots, \vect{C}_{Li}\right)$ and $\vect{C}_{\vect{e}} = \blkdiag\left(\vect{C}_{\vect{e},1}, \ldots, \vect{C}_{\vect{e},L}\right)$ from \eqref{eq:distortion_cov}, evaluated using the long-term statistics $\diag(\vect{R}_{\vect{y},l}) = \diag\big(\sum_{i=1}^{K} p_i \vect{R}_{li}\big) + \sigma_1^2 \vect{I}_N$. The SINR in \eqref{eq:SINR} is maximized by the quantization-aware MMSE combiner
\begin{equation}
    \vect{v}_k = p_k \left( \sum_{i=1}^{K} p_i\, \hat{\vect{h}}_i \hat{\vect{h}}_i^{\Htran} + \vect{Z} \right)^{-1} \hat{\vect{h}}_k.
    \label{eq:mmse_combiner}
\end{equation}
Accounting for the access bandwidth and the fraction of the frame allocated to the access phase, the effective throughput of UE $k$ is
\begin{equation}
    R_k = t_1 B_1\, \mathsf{SE}_k \quad \text{[bit/s]}.
    \label{eq:throughput}
\end{equation}

\begin{remark}[End-to-end Bussgang-based evaluation]
\label{rem:bussgang}
The AQNM in \eqref{eq:aqnm} treats the quantization distortion as uncorrelated with the input and across antennas and time instants, which is an approximation that neglects the true distortion correlations of the nonlinear quantizer. It is adopted solely to obtain a tractable framework for the joint optimization of the resource-allocation variables and the per-AP resolutions $\{b_l\}$. In the performance evaluation, the actual Lloyd--Max quantizers are applied to both pilot and data samples, and the achievable rates are computed via the end-to-end Bussgang decomposition of the cascaded system \cite{demir2020bussgang}, using the worst-case uncorrelated additive distortion bound. This ensures that the reported rates account for the quantization-error correlations that the AQNM ignores, and that the optimized resolutions $\{b_l\}$ are validated under the exact quantizer behavior. The complete evaluation methodology is presented in Section~\ref{sec:bussgang}.
\end{remark}

\subsection{Wireless Fronthaul Phase}
\label{subsec:fronthaul}
During the fronthaul phase of duration $t_2$, the APs deliver their buffered bit streams to the CPU over the bandwidth $B_2$ (the full band being available in the TD mode, and the fronthaul subband in the FD mode). The CPU is equipped with $M_c \geq L$ antennas, and all active APs transmit \emph{simultaneously} via space-division multiple access (SDMA), being separated at the CPU by zero-forcing (ZF) combining \cite{topal2026unlocking}. Since the AP and CPU deployments are static with line-of-sight (LOS) connectivity, the fronthaul channels are assumed to be perfectly known.

Each AP transmits a single stream toward the CPU using transmit beamforming over its $N$ antennas, and the CPU separates the $L_{\mathcal{A}}$ simultaneously transmitting active APs via ZF combining over the effective channel matrix $\bar{\vect{G}}$, whose rows are the conjugated effective fronthaul channels $\vect{g}_l^{\Htran}$, $l \in \mathcal{A}$, after transmit beamforming. The achievable fronthaul rate of AP $l \in \mathcal{A}$ is \cite{topal2026unlocking}
\begin{equation}
    R_l^{\rm fh} = t_2 B_2 \log_2\left( 1 + \frac{\bar{p}_l}{N_0 B_2 D_l} \right), \quad D_l = \left[ \left( \bar{\vect{G}}^{\Htran} \bar{\vect{G}} \right)^{-1} \right]_{ll},
    \label{eq:fronthaul_rate}
\end{equation}
in bit/s, where $\bar{p}_l \in (0, P_{\rm fh}^{\max}]$ is the fronthaul transmit power of AP $l$---an optimization variable in this work---and the noise power $N_0 B_2$ scales with the fronthaul bandwidth. Note that the gains $\{D_l\}$ depend on the active set: removing an AP from $\mathcal{A}$ relaxes the ZF constraints and thus reduces (improves) the effective inverse gains $D_l$ of the remaining APs.

\emph{Fronthaul rate requirement:} The bits generated during the access phase must be fully delivered within the fronthaul phase, i.e.,
\begin{equation}
     R_l^{\rm fh} \geq F_l, \qquad l \in \mathcal{A},
    \label{eq:fronthaul_constraint}
\end{equation}
with $F_l$ given in \eqref{eq:fronthaul_load}. For any given bandwidths, time split, and resolutions, the minimum fronthaul powers that satisfy \eqref{eq:fronthaul_constraint} with equality follow in closed form by channel inversion:
\begin{equation}
    \bar{p}_l = N_0 B_2 D_l \left( 2^{\frac{F_l}{t_2 B_2}} - 1 \right), \qquad l \in \mathcal{A},
    \label{eq:zf_power}
\end{equation}
which must respect the per-AP fronthaul power budget $\bar{p}_l \leq P_{\rm fh}^{\max}$. Equation \eqref{eq:zf_power} exposes the core coupling of the integrated access--fronthaul design: increasing $B_1$, $t_1$, or $b_l$ improves the access-side rate and reduces the quantization distortion, but inflates the fronthaul load exponentially in the exponent $F_l/(t_2 B_2)$, thereby increasing the fronthaul transmit power or forcing a larger $t_2$ or $B_2$.

\subsection{Power Consumption Model}
\label{subsec:power_model}
We model the network power consumption per frame by combining the AP transceiver model of \cite{enqvist2024fundamentals,lopezperez2021energy} with the end-to-end cell-free network model of \cite{topal2026unlocking}. The total consumed energy per unit-duration frame (equivalently, the average power) is
\begin{align}
    P_{\rm tot} &= t_1 P^{\rm ac} + \left( 1 - t_1 \right) P^{\rm sl}_{\rm ac} + t_2 P^{\rm fh} + \left( 1 - t_2 \right) P^{\rm sl}_{\rm fh} \nonumber \\
    &\quad + \left( L - L_{\mathcal{A}} \right) P_{\rm sl} + P^{\rm fix},
    \label{eq:total_power}
\end{align}
where $P^{\rm sl}_{\rm ac} = L_{\mathcal{A}}\, \delta_{\rm s} \left( \mu_{\rm AP} + D_0 N \right)$ is the sleep power of the access transceivers of the active APs while the access link is inactive, $P^{\rm sl}_{\rm fh} = \delta_{\rm s} \big(  L_{\mathcal{A}} P_{0,{\rm fh}} + \mu_{\rm CPU}^{\rm fh} + D_0^{\rm c} M_c \big)$ is the corresponding sleep power of the fronthaul radios, and each switched-off AP consumes the constant sleep power $P_{\rm sl} = \delta_{\rm s} \left( \mu_{\rm AP} + D_0 N + P_{0,{\rm fh}} \right)$ during the entire frame. Here, $\delta_{\rm s} \in [0, 1]$ is the sleep-depth parameter determined by the hardware capabilities \cite{enqvist2026sleep}, and active operation is assumed more power-hungry than sleeping, i.e., $P^{\rm ac} > P^{\rm sl}_{\rm ac}$ and $P^{\rm fh} > P^{\rm sl}_{\rm fh}$. In the TD mode, $1 - t_1 = t_2$ and $1 - t_2 = t_1$, so \eqref{eq:total_power} captures that each radio interface sleeps while the other link occupies the band; in the FD mode, the idle fractions correspond to genuine sleep intervals. The remaining constituents are detailed below.

\subsubsection{Access-phase power} During $t_1$, the active APs receive over the bandwidth $B_1$ and the UEs transmit:
\begin{equation}
    P^{\rm ac} = \sum_{l \in \mathcal{A}} \left( \mu_{\rm AP} + \left( D_0 + \nu B_1 \right) N \right) + \sum_{k=1}^{K} \left( \frac{p_k}{\kappa_{\rm UE}} + P_{0,{\rm UE}} \right),
    \label{eq:access_power}
\end{equation}
where $\mu_{\rm AP}$ is the fixed AP circuit power (local oscillator, control), $D_0$ is the static power per receive chain, $\nu B_1$ captures the bandwidth-proportional per-antenna processing power (analog-to-digital conversion and baseband operations running at the sampling rate), with a typical value of $\nu = 10^{-10}$\,J/sample \cite{enqvist2024fundamentals}, $\kappa_{\rm UE} \in (0,1]$ is the UE power amplifier efficiency, and $P_{0,{\rm UE}}$ is the UE circuit power.

\subsubsection{Fronthaul-phase power} During $t_2$, all active APs simultaneously transmit their bit streams and the CPU radio receives:
\begin{equation}
    P^{\rm fh} = \sum_{l \in \mathcal{A}} \left( \frac{\bar{p}_l}{\kappa_{\rm fh}} + P_{0,{\rm fh}} \right) + \mu_{\rm CPU}^{\rm fh} + \left( D_0^{\rm c} + \nu_{\rm c} B_2 \right) M_c,
    \label{eq:fronthaul_power}
\end{equation}
where $\kappa_{\rm fh}$ is the fronthaul power amplifier efficiency, $P_{0,{\rm fh}}$ is the static power of the AP fronthaul radio \cite{topal2026unlocking}, and the last two terms model the CPU-side fronthaul receiver with $M_c$ antennas, analogously to the access-phase transceiver model.

\subsubsection{Load-independent and processing power} The CPU consumes a fixed power and a rate-dependent processing/decoding power:
\begin{equation}
    P^{\rm fix} = P_{\rm CPU} + \eta_{\rm dec} \sum_{k=1}^{K} R_k,
    \label{eq:fixed_power}
\end{equation}
where $\eta_{\rm dec}$ [W per bit/s] models the decoding and centralized-processing power that scales with the delivered throughput \cite{enqvist2026sleep}; typical values are $10^{-11}$\,J/bit for the rate-proportional coding and backhaul signaling \cite{enqvist2024fundamentals} and $0.1$ and $0.8$\,W per Gbit/s for the data coding and decoding, respectively \cite{bjornson2017massive}. A finer, giga operations per second (GOPS)-based model of the centralized processing load as in \cite{topal2026unlocking} can be substituted without changing the structure of the optimization problem.

\subsection{Energy Efficiency Maximization Problem}
\label{subsec:problem}
The network energy efficiency (EE) is defined as the total number of delivered information bits per unit of consumed energy,
\begin{equation}
    \mathsf{EE}\left(B_1, B_2, t_1, t_2, \{b_l\}, \{\bar{p}_l\}\right) = \frac{\sum_{k=1}^{K} R_k}{P_{\rm tot}} \quad \text{[bit/Joule]},
    \label{eq:EE}
\end{equation}
where the throughput $R_k$ in \eqref{eq:throughput} depends on $(t_1, B_1, \{b_l\})$ through the SE in \eqref{eq:SE}--\eqref{eq:SINR} and the distortion factors $\eta_l \approx c_q 2^{-2 b_l}$, while $P_{\rm tot}$ in \eqref{eq:total_power} depends on all optimization variables, including the fronthaul powers $\{\bar{p}_l\}$, whose minimum feasible values are given in closed form in \eqref{eq:zf_power}.

To treat the two operating modes of Section~\ref{subsec:frame} in a unified manner, we define the mode-dependent time--bandwidth feasibility sets
\begin{align}
    \mathcal{X}_{\rm TD} &= \big\{ (B_1, B_2, t_1, t_2) : t_1 + t_2 = 1, \; t_1, t_2 > 0, \nonumber \\
    &\qquad\quad 0 < B_1 \leq B, \; 0 < B_2 \leq B \big\},
    \label{eq:XTD} \\
    \mathcal{X}_{\rm FD} &= \big\{ (B_1, B_2, t_1, t_2) : 0 < t_1 \leq 1, \; 0 < t_2 \leq 1, \nonumber \\
    &\qquad\quad B_1, B_2 > 0, \; B_1 + B_2 \leq B \big\},
    \label{eq:XFD}
\end{align}
and let $\mathcal{X} \in \{ \mathcal{X}_{\rm TD}, \mathcal{X}_{\rm FD} \}$ denote the feasibility set of the selected mode. The joint access--fronthaul--sleep resource allocation and adaptive quantization resolution problem is formulated as
\begin{subequations}
\label{eq:problem}
\begin{align}
    \underset{\substack{B_1, B_2, t_1, t_2, \\ \{b_l\}, \{\bar{p}_l\}}}{\mathrm{maximize}} \quad & \mathsf{EE}\left(B_1, B_2, t_1, t_2, \{b_l\}, \{\bar{p}_l\}\right)
    \label{eq:objective} \\
    \mathrm{subject~to} \quad
    &  R_l^{\rm fh} \geq 2 N B_1 t_1 \left( \tfrac{\tau_p}{\tau_c} b_{\rm p} + \tfrac{\tau_u}{\tau_c} b_l \right), \; \forall l \in \mathcal{A},
    \label{eq:c_fronthaul} \\
    & 0 < \bar{p}_l \leq P_{\rm fh}^{\max}, \quad \forall l \in \mathcal{A},
    \label{eq:c_fhpower} \\
    & \left( B_1, B_2, t_1, t_2 \right) \in \mathcal{X},
    \label{eq:c_mode} \\
    & b_l \in \{0, 1, \ldots, b_{\max}\}, \quad l = 1, \ldots, L,
    \label{eq:c_bits} \\
    & R_k \geq R_{\min}, \quad k = 1, \ldots, K,
    \label{eq:c_qos}
\end{align}
\end{subequations}
where \eqref{eq:c_fronthaul} guarantees that every AP empties its buffer within the fronthaul phase, \eqref{eq:c_fhpower} enforces the per-AP fronthaul power budget, \eqref{eq:c_mode} enforces the time--bandwidth feasibility region of the selected operating mode, \eqref{eq:c_bits} restricts the per-AP quantization resolutions to practical integer values---with $b_l = 0$ corresponding to putting AP $l$ to sleep for the entire frame---and the optional quality-of-service constraint \eqref{eq:c_qos} imposes a minimum throughput per UE.

Problem \eqref{eq:problem} is a mixed-integer, nonconvex fractional program: the objective is a ratio of nonconvex functions, the variables $(B_1, t_1,\{b_l\})$ and $(B_2, t_2, \{\bar{p}_l\})$ are coupled through the fronthaul constraint \eqref{eq:c_fronthaul}, the resolutions $\{b_l\}$ are integers that enter both the rate (through $\eta_l$) and the fronthaul load. In Section~\ref{sec:algorithm}, we develop an alternating optimization algorithm in which every block update admits either a closed-form solution or an efficient one-dimensional search; the resulting design is then validated through the end-to-end Bussgang-based rate evaluation of Remark~\ref{rem:bussgang}.

\begin{remark}[Fixed uplink UE transmit powers]
\label{rem:uepower}
Throughout this work, the uplink UE transmit powers are fixed at their maximum value, i.e., $p_k = p_{\max}$ for all $k$, and are therefore not included as optimization variables in \eqref{eq:problem}. This simplification is motivated by the primary focus of the paper on network-side resource allocation, namely the optimization of the time split, bandwidth allocation, quantization resolutions, and fronthaul transmit powers. These network-side components dominate the total power consumption in \eqref{eq:total_power}, whereas the UE power contribution,
$
t_1 \sum_{k=1}^{K} \left( \frac{p_k}{\kappa_{\rm UE}} + P_{0,{\rm UE}} \right),$
represents only a comparatively small fraction of the overall power budget.
\end{remark}

\vspace{-4mm}

\section{Alternating Optimization for EE Maximization}
\label{sec:algorithm}

Problem \eqref{eq:problem} couples continuous resource-allocation variables, integer quantization resolutions, and per-AP fronthaul powers through the nonconvex fractional objective \eqref{eq:EE} and the fronthaul delivery constraint \eqref{eq:c_fronthaul}, so a joint search is intractable even for moderate $L$. Instead, we exploit the fact that the problem exhibits a favorable structure \emph{per block of variables} and propose an alternating optimization procedure over four blocks: (i) the time split $(t_1, t_2)$; (ii) the access bandwidth---updated alone in the TD mode, and \emph{jointly with the fronthaul bandwidth} in the FD mode, where the band budget $B_1 + B_2 \leq B$ couples the two bandwidths and $B_2$ is set to the minimum fronthaul bandwidth induced by $B_1$; (iii) the fronthaul bandwidth and quantization resolutions $(B_2, \{b_l\})$; and (iv) the fronthaul bandwidth--power pair $(B_2, \{\bar{p}_l\})$. Every block update admits either a closed-form solution or a scalar bisection/grid search, and no update decreases the EE, which guarantees monotone convergence. The complete procedure is summarized in Algorithm~\ref{alg:ao} and applies to both the TD and the FD modes. Steps 1, 3, and 4 share the same structure in the two modes and differ only through the mode-dependent time and bandwidth bounds highlighted in each step; the sole structural difference is Step 2, which optimizes $B_1$ alone in the TD mode (for a fixed $B_2$), whereas in the FD mode it updates the pair $(B_1, B_2)$ jointly along the coupling $B_2 = B_2^{\min}(B_1)$ dictated by the most binding fronthaul delivery constraint. In addition, the sweep in Step 3 embeds a sleep-mode-based AP switch-off mechanism: whenever the fronthaul cannot support even the minimum resolution $b_l = 1$ for an AP, that AP is put to sleep ($b_l = 0$), the fronthaul ZF gains and the receive combiners are re-derived over the remaining active APs, and the optimization proceeds with the reduced active set.

Throughout this section, we write the fronthaul load in \eqref{eq:fronthaul_load} as $F_l = t_1 \tilde{F}_l$, where
\begin{equation}
    \tilde{F}_l = 2 N B_1 \left( \tfrac{\tau_p}{\tau_c} b_{\rm p} + \tfrac{\tau_u}{\tau_c} b_l \right)
    \label{eq:bitrate}
\end{equation}
is the fronthaul bit-arrival rate of AP $l$ during the access phase, i.e., the number of bits per second that AP $l$ generates while the access link is active. Moreover, we define 
\begin{equation}
    \tilde{R}_l^{\rm fh} =\frac{R_l^{\rm fh}}{t_2}= B_2 \log_2\left( 1 + \frac{\bar{p}_l}{N_0 B_2 D_l} \right).
    \label{eq:fronthaul_rate2}
\end{equation}

\subsection{Step 1: Optimal Time Split $(t_1, t_2)$}
\label{subsec:step1}

We first optimize the time split for fixed $(B_1, B_2, \{b_l\}, \{\bar{p}_l\})$. The key observation is that both the throughput and the consumed energy are linear in $(t_1, t_2)$, so the optimal split admits a closed form.

\begin{proposition}[Optimal time split]
\label{prop:tsplit}
For fixed $(B_1, B_2, \{b_l\}, \{\bar{p}_l\})$, define the \emph{fronthaul expansion factor}
\begin{equation}
    \omega = \max_{l \in \mathcal{A}} \frac{\tilde{F}_l}{\tilde{R}_l^{\rm fh}},
    \label{eq:omega}
\end{equation}
i.e., the minimum fronthaul transmission time required per unit of access time, dictated by the most binding AP. Then, the optimal solution of \eqref{eq:problem} restricted to $(t_1, t_2)$ is
\begin{align}
    &\text{TD mode:} \quad t_1^\star = \frac{1}{1 + \omega}, \qquad\; t_2^\star = \frac{\omega}{1 + \omega},
    \label{eq:tsplit} \\
    &\text{FD mode:} \quad t_1^\star = \min\left\{ 1, \tfrac{1}{\omega} \right\}, \quad t_2^\star = \omega\, t_1^\star.
    \label{eq:tsplit_fd}
\end{align}
\end{proposition}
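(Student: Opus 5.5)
Fix $(B_1,B_2,\{b_l\},\{\bar p_l\})$. Then $\tilde F_l$ and $\tilde R_l^{\rm fh}$ are constants, so the fronthaul constraint \eqref{eq:c_fronthaul} reads $t_2\tilde R_l^{\rm fh}\ge t_1\tilde F_l$ for all $l\in\mathcal A$. This is equivalent to the single linear constraint $t_2\ge \omega t_1$, with $\omega$ as in \eqref{eq:omega}. The power budget \eqref{eq:c_fhpower} does not involve $(t_1,t_2)$, since $\bar p_l$ is held fixed.

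Next I would record the dependence of the objective on $(t_1,t_2)$. The numerator is $\sum_k R_k=t_1 S$ with $S=B_1\sum_k\mathsf{SE}_k>0$, and $S$ does not depend on the time split. From \eqref{eq:total_power} and \eqref{eq:fixed_power}, the denominator is affine:
\[
P_{\rm tot}=c_0+t_1 a_1+t_2 a_2+\eta_{\rm dec}t_1S,
\]
with $a_1=P^{\rm ac}-P^{\rm sl}_{\rm ac}>0$, $a_2=P^{\rm fh}-P^{\rm sl}_{\rm fh}>0$, and a constant $c_0>0$. Positivity of $a_1,a_2$ follows from the standing assumption that active operation consumes more power than sleep. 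Two facts then drive the argument. First, for fixed $t_1$ the EE strictly decreases in $t_2$, so the constraint is tight at the optimum: $t_2=\omega t_1$. Second, after this substitution the EE becomes
\[
\frac{t_1S}{c_0+t_1(a_1+\omega a_2+\eta_{\rm dec}S)},
\]
which strictly increases in $t_1$ because $c_0>0$. Hence $t_1$ should be pushed to the largest value the mode allows.

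The feasibility sets then give the two cases. In the TD mode, $t_1+t_2=1$ and $t_2\ge\omega t_1$ together give $t_1\le 1/(1+\omega)$. Taking equality yields \eqref{eq:tsplit}, and the tightness argument shows $t_2^\star=1-t_1^\star=\omega t_1^\star$. In this mode $t_2$ is not free, so I would substitute $t_2=1-t_1$ directly and check monotonicity instead of relying on the first fact. The EE becomes $t_1S/(c_0+a_2+t_1(a_1-a_2+\eta_{\rm dec}S))$, and its derivative in $t_1$ has the sign of $c_0+a_2>0$. It is therefore increasing, and the optimum sits at the fronthaul-constraint boundary $t_1=1/(1+\omega)$. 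In the FD mode, $t_1\le1$ and $t_2=\omega t_1\le 1$ give $t_1\le\min\{1,1/\omega\}$, which yields \eqref{eq:tsplit_fd}.

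The main obstacle is bookkeeping rather than analysis. In the TD mode the sleep terms $(1-t_1)P^{\rm sl}_{\rm ac}$ and $(1-t_2)P^{\rm sl}_{\rm fh}$ must be rewritten using $1-t_1=t_2$ and $1-t_2=t_1$, and the constant term $c_0+a_2$ must be shown positive; this holds because it collects $P^{\rm fix}$'s constant part, the $P_{\rm sl}$ terms, and $P^{\rm fh}\ge P^{\rm sl}_{\rm fh}$. The QoS constraint \eqref{eq:c_qos} is monotone in $t_1$, so pushing $t_1$ up never breaks it. Feasibility of the restricted problem is assumed.
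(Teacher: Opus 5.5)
Your proof is correct and follows the paper's argument. You rewrite the fronthaul constraint as $t_2 \geq \omega t_1$. In the TD mode you substitute $t_2 = 1 - t_1$ and read off monotonicity from the sign of the linear-fractional derivative; in the FD mode you first tighten to $t_2 = \omega t_1$ and then push $t_1$ up to $\min\{1, 1/\omega\}$. Your bookkeeping is slightly cleaner than the paper's, for two reasons. The paper folds all of $P^{\rm fix}$ into its intercepts $\beta_{\rm TD}$ and $\beta_{\rm FD}$, even though $P^{\rm fix}$ contains $\eta_{\rm dec}\sum_k R_k$, which grows with $t_1$; you instead keep the rate-dependent term $\eta_{\rm dec} t_1 S$ in the slope. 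You also explicitly check that the QoS constraint survives increasing $t_1$, which the paper's proof does not mention.
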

\begin{IEEEproof}
In both modes, the fronthaul constraint \eqref{eq:c_fronthaul} is equivalent to $t_2 \geq \omega t_1$. Consider first the TD mode and substitute $t_2 = 1 - t_1$ from $\mathcal{X}_{\rm TD}$: the denominator of the EE in \eqref{eq:EE} becomes the affine function $t_1 \left( P^{\rm ac} + P^{\rm sl}_{\rm fh} - P^{\rm sl}_{\rm ac} - P^{\rm fh} \right) + \beta_{\rm TD}$ with the constant part $\beta_{\rm TD} = P^{\rm sl}_{\rm ac} + P^{\rm fh} + (L - L_{\mathcal{A}}) P_{\rm sl} + P^{\rm fix} > 0$, while the numerator is $t_1 B_1 \sum_{k} \mathsf{SE}_k$. The derivative of such a linear-fractional function has the sign of $\beta_{\rm TD} > 0$, so the EE is strictly increasing in $t_1$, and the optimum saturates the feasibility bound $1 - t_1 \geq \omega t_1$, yielding \eqref{eq:tsplit}. In the FD mode, for fixed $t_1$, the EE is strictly decreasing in $t_2$, because the numerator does not depend on $t_2$ while the denominator grows at the rate $P^{\rm fh} - P^{\rm sl}_{\rm fh} > 0$; hence, $t_2 = \omega t_1$ at the optimum. Substituting, the denominator becomes $t_1 \left[ \left( P^{\rm ac} - P^{\rm sl}_{\rm ac} \right) + \omega \left( P^{\rm fh} - P^{\rm sl}_{\rm fh} \right) \right] + \beta_{\rm FD}$ with $\beta_{\rm FD} = P^{\rm sl}_{\rm ac} + P^{\rm sl}_{\rm fh} + (L - L_{\mathcal{A}}) P_{\rm sl} + P^{\rm fix} > 0$, so the EE is again strictly increasing in $t_1$, and the optimum is limited only by $t_1 \leq 1$ and $t_2 = \omega t_1 \leq 1$ from $\mathcal{X}_{\rm FD}$, yielding \eqref{eq:tsplit_fd}.
\end{IEEEproof}

Proposition~\ref{prop:tsplit} admits an intuitive reading. In the TD mode, idle time is never beneficial because the sleep and fixed powers are consumed regardless of the split, so the frame is packed with as much access time as the fronthaul can absorb, and $\omega$ quantifies how much fronthaul time each second of access time costs. In the FD mode, the same reasoning pushes the access link to be active during the whole frame ($t_1^\star = 1$) whenever the fronthaul subband is fast enough ($\omega \leq 1$), in which case the fronthaul radio sleeps for the fraction $1 - \omega$ of the frame; if instead $\omega > 1$, the fronthaul becomes the bottleneck and the access activity must be throttled to $t_1^\star = 1/\omega < 1$.

\subsection{Step 2: Access-Bandwidth Update (TD: $B_1$; FD: Joint $(B_1, B_2)$)}
\label{subsec:step2}

We next update the access bandwidth for fixed $(t_1, t_2, \{b_l\},\{\bar{p}_l\})$. In the TD mode, $B_1$ is optimized for a fixed $B_2$; in the FD mode, where the two bandwidths share the band budget, $B_1$ and $B_2$ are updated \emph{jointly}, as described at the end of this subsection. Consider first the TD mode, in which the search interval is simply $0 < B_1 \leq B_1^{\max}$ with $B_1^{\max} = B$: the fronthaul delivery constraints are not imposed within this step---they are restored immediately afterwards by Steps 3--4, which re-decide the resolutions and the fronthaul powers at the updated bandwidth---and the safeguard rule of Section~\ref{subsec:overall} ensures that only EE-improving updates are retained.

The access bandwidth affects the SE in two distinct ways. Besides the \emph{direct} dependence through the noise power $\sigma_1^2 = N_0 B_1$ in the data-domain effective-noise covariance, there is an \emph{indirect} dependence through the channel estimation quality: the matrix $\bm{\Psi}_{lk}$ in the LMMSE estimator contains $\sigma_1^2$ (and the pilot-quantization distortion), so the channel estimates $\{\hat{\vect{h}}_{lk}\}$, the error covariances $\{\vect{C}_{lk}\}$, and consequently all expected values entering the SE expression \eqref{eq:SE}--\eqref{eq:SINR} vary with $B_1$. To obtain a tractable update, we deliberately keep this indirect dependence \emph{frozen} during the $B_1$ optimization: the receive combiners, the channel estimates, and the estimation-error statistics are fixed at the values computed with the $B_1$ of the current iterate, and only the direct noise scaling is retained. This simplification is adopted purely for tractability and is compensated at every outer iteration: once the block updates of the iteration are completed, the estimates, the error statistics, and the combiners are re-computed with the updated $B_1$, so that the frozen quantities always track the current operating point. Together with the safeguarded acceptance rule of Section~\ref{subsec:overall}, this ensures that the simplification never decreases the exact EE along the iterations. Under this convention, recall from \eqref{eq:distortion_cov} that the effective-noise covariance decomposes as
\begin{equation}
    \vect{Z} = \underbrace{\sum_{i=1}^{K} p_i \vect{C}_i + \vect{C}_{\vect{e}}^{(0)}}_{\displaystyle \triangleq\, \vect{Z}^{(0)}} + \; N_0 B_1 \left( \vect{I}_M + \bm{\Lambda} \right),
    \label{eq:Zdecomp}
\end{equation}
where $\vect{C}_{\vect{e}}^{(0)}$ collects the signal-driven part of the quantization distortion, obtained by replacing $\diag(\vect{R}_{\vect{y},l})$ in \eqref{eq:distortion_cov} with $\diag\big( \sum_i p_i \vect{R}_{li} \big)$, and $\bm{\Lambda} = \blkdiag\big( \tfrac{\eta_1}{1-\eta_1} \vect{I}_N, \ldots, \tfrac{\eta_L}{1-\eta_L} \vect{I}_N \big)$ scales the noise-driven part: the thermal noise enters the quantizer input, so its distortion image also grows linearly with $N_0 B_1$. Consequently, for each channel realization, the effective SINR of UE $k$ in \eqref{eq:SINR} with a fixed combiner takes the form
\begin{equation}
    \mathsf{SINR}_k(B_1) = \frac{a_k}{d_k B_1 + c_k},
    \label{eq:sinr_structure}
\end{equation}
with the constants
\begin{align}
    a_k &= p_k \big| \vect{v}_k^{\Htran} \hat{\vect{h}}_k \big|^2, \qquad
    d_k = N_0\, \vect{v}_k^{\Htran} \left( \vect{I}_M + \bm{\Lambda} \right) \vect{v}_k, \nonumber \\
    c_k &= \sum_{i \neq k} p_i \big| \vect{v}_k^{\Htran} \hat{\vect{h}}_i \big|^2 + \vect{v}_k^{\Htran} \vect{Z}^{(0)} \vect{v}_k,
    \label{eq:acd}
\end{align}
all of which are positive and independent of $B_1$. Approximating the expectation in \eqref{eq:SE} by a sample average over the channel realizations of the current iterate, the sum throughput becomes
\begin{equation}
    g(B_1) = t_1 \frac{\tau_u}{\tau_c} \frac{1}{T}\sum_{m} B_1 \log_2\left( 1 + \frac{a_m}{d_m B_1 + c_m} \right),
    \label{eq:gB1}
\end{equation}
where the index $m$ runs over all (UE, realization) pairs and $T$ is the number of sample realizations. On the power side, only the bandwidth-proportional AP processing power and the decoding power depend on $B_1$:
\begin{equation}
    P_{\rm tot}(B_1) = \bar{k} B_1 + \bar{l} + \eta_{\rm dec}\, g(B_1),
    \label{eq:PtotB1}
\end{equation}
with $\bar{k} = t_1 \nu N L_{\mathcal{A}} > 0$ and $\bar{l} > 0$ collecting all $B_1$-independent terms. Therefore,
\begin{equation}
    \mathsf{EE}(B_1) = \frac{g(B_1)}{\bar{k} B_1 + \bar{l} + \eta_{\rm dec}\, g(B_1)} = \frac{1}{f(B_1) + \eta_{\rm dec}},
    \label{eq:EEB1}
\end{equation}
with $f(B_1) = \frac{\bar{k} B_1 + \bar{l}}{g(B_1)}$, so maximizing the EE over $B_1$ is equivalent to minimizing $f(B_1)$ over $(0, B]$. The following lemma shows that a scalar bisection suffices.

\begin{lemma}
\label{lem:bisection}
Let $g(B) = \sum_{m} w_m B \log_2\left(1 + \frac{a_m}{d_m B + c_m}\right)$ with $w_m, a_m, c_m, d_m > 0$, and let $f(B) = \frac{\bar{k} B + \bar{l}}{g(B)}$ with $\bar{k} \geq 0$ and $\bar{l} > 0$. Then, $g$ is strictly increasing and strictly concave on $(0, \infty)$, and
\begin{equation}
    \phi(B) = \bar{k}\, g(B) - \left( \bar{k} B + \bar{l} \right) g'(B)
    \label{eq:phi}
\end{equation}
is strictly increasing with $\phi(0^+) < 0$. Consequently, $f$ has a unique global minimizer on $(0, B]$, given by $\min\{B_\phi, B\}$, where $B_\phi$ is the unique root of $\phi(B) = 0$, obtained by bisection; if $\phi(B) \leq 0$, the minimizer is $B$.
\end{lemma}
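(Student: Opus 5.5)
The plan is to reduce everything to a single scalar term and then use the fact that $f'$ has the same sign as $\phi$. Write $g=\sum_m w_m h_m$ with $h_m(B)=B\log_2\!\big(1+\tfrac{a_m}{d_mB+c_m}\big)$. Since $w_m>0$, strict monotonicity and strict concavity of $g$ follow once they hold for each $h_m$. Drop the index $m$, set $u=dB+c$ and $s=u+a$, and use the natural logarithm, since the factor $1/\ln 2$ is positive and irrelevant. Then $h(B)=B\big(\ln s-\ln u\big)$, and direct differentiation gives
\begin{equation*}
h'(B)=\ln\frac{s}{u}-\frac{dBa}{us}.
\end{equation*}
For the sign of $h'$ I will use $\ln(s/u)\ge (s-u)/s=a/s$. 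Combined with $dB<u$, which holds because $c>0$, this gives $dBa/(us)<a/s\le\ln(s/u)$, so $h'>0$.

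For concavity I differentiate once more, using $u'=s'=d$. The two contributions $d/s-d/u=-da/(us)$ and $-\tfrac{d}{dB}\big(dBa/(us)\big)$ combine to
\begin{equation*}
h''(B)=\frac{da}{(us)^2}\Big(dB(s+u)-2us\Big).
\end{equation*}
Because $dB<u<s$, we get $dB(s+u)<u(s+u)<2us$, hence $h''<0$. This short but fiddly computation is the step I expect to require the most care: one must check that the bracket really collapses to this sign-definite form, and that strictness survives, which relies on $a,c,d>0$.

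Next I handle $\phi$. Differentiating $\phi(B)=\bar k g(B)-(\bar kB+\bar l)g'(B)$, the terms $\bar k g'$ cancel, leaving $\phi'(B)=-(\bar kB+\bar l)\,g''(B)$. This is strictly positive because $\bar l>0$, $\bar k\ge 0$ and $g''<0$, so $\phi$ is strictly increasing. At the left endpoint, $g(0^+)=0$ (each $h_m$ is $B$ times a bounded logarithm) and $g'(0^+)=\sum_m w_m\log_2(1+a_m/c_m)>0$, so $\phi(0^+)=-\bar l\,g'(0^+)<0$.

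Finally, $f'(B)=\big[\bar k g-(\bar kB+\bar l)g'\big]/g^2=\phi(B)/g(B)^2$, so $f'$ has the sign of $\phi$. Two cases cover the interval $(0,B]$:
\begin{itemize}
\item If $\phi$ has a root $B_\phi\le B$, it is unique by strict monotonicity. Then $f$ is strictly decreasing on $(0,B_\phi)$ and strictly increasing afterwards, so $B_\phi$ is the unique global minimizer.
\item If instead $\phi(B)\le 0$ at the upper endpoint $B$, then $\phi<0$ on $(0,B)$. Hence $f$ is strictly decreasing there and the minimizer is $B$.
\end{itemize}
Together these give $\min\{B_\phi,B\}$. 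Since $f(0^+)=+\infty$ (because $\bar l>0$ and $g\to0$), the infimum is not approached at the left end. The sign change of the monotone function $\phi$ on $(0,B]$ justifies locating $B_\phi$ by bisection.
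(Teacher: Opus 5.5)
Your proof is correct and follows the same route as the paper. The paper also differentiates each summand, gets $h'>0$ from $\ln(1+x)>x/(1+x)$, shows $h''<0$, and then uses $\phi'=-(\bar k B+\bar l)g''$, $\phi(0^+)=-\bar l\,g'(0^+)$, and $f'=\phi/g^2$. The only difference is cosmetic: the paper normalizes by $d_m$ and writes $h''$ as an explicitly negative closed form, whereas you keep $u,s$ and bound $dB(s+u)-2us<0$ via $dB<u<s$; your case split on the sign of $\phi(B)$ (with the intermediate value theorem implicit) is slightly more careful than the paper's one-line ending.
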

\begin{IEEEproof}
Each summand of $g$ can be written as $\frac{w_m}{\ln (2)} B \ln\left(1 + \frac{A}{B + C}\right)$ with $A = a_m / d_m > 0$ and $C = c_m / d_m > 0$. The first derivative of $B \ln\left(1 + \frac{A}{B + C}\right)$ is $\ln\left(1 + \frac{A}{B+C}\right) - \frac{AB}{(B+C)(A+B+C)}$ and is positive for all $B > 0$: the inequality $\ln(1+x) > \frac{x}{1+x}$ with $x = \frac{A}{B+C}$ gives the lower bound $\frac{A}{A+B+C}$, which dominates $\frac{AB}{(B+C)(A+B+C)}$ since $B + C > B$. Its second derivative equals
\begin{equation}
    -\frac{A \left( A(B+C) + 2C(B+C) + AC \right)}{(B+C)^2 (B+C+A)^2} < 0.
\end{equation}
Hence, $g$ is strictly increasing and strictly concave. It follows that $\phi'(B) = -(\bar{k} B + \bar{l})\, g''(B) > 0$, so $\phi$ is strictly increasing. Moreover, $g(0^+) = 0$ and $g'(0^+) = \sum_m w_m \log_2(1 + a_m / c_m) \in (0, \infty)$ yield $\phi(0^+) = -\bar{l}\, g'(0^+) < 0$. Since $f'(B) = \phi(B)/g^2(B)$ has the same sign as $\phi(B)$, the function $f$ is strictly decreasing before the unique root of $\phi$ and strictly increasing after it, which proves the claim.
\end{IEEEproof}

\emph{Joint $(B_1, B_2)$ update in the FD mode:} In the FD mode, updating $B_1$ at a fixed $B_2$ would confine the search to $B_1 \leq B - B_2$, which is unnecessarily restrictive. Instead, we exploit the fact that, at the optimum of the joint update, the most binding delivery constraint holds with equality at the fronthaul powers $\{\bar{p}_l\}$ inherited from the previous iteration: for a given $B_1$, the minimum fronthaul bandwidth that AP $l$ requires is the unique root of
\begin{equation}
    t_2\, B_2 \log_2\left( 1 + \frac{\bar{a}_l^{\rm fh}}{B_2} \right) = t_1 \tilde{F}_l, \qquad \bar{a}_l^{\rm fh} = \frac{\bar{p}_l}{N_0 D_l},
    \label{eq:B2min}
\end{equation}
since the left-hand side is strictly increasing and concave in $B_2$; taking the maximum over $l \in \mathcal{A}$ yields $B_2^{\min}(B_1)$, which is strictly increasing in $B_1$. Substituting this coupling into the objective makes the EE one-dimensional in $B_1$: the throughput retains the concave structure of \eqref{eq:gB1}, while the denominator now depends on $B_1$ both through the affine access-side term in \eqref{eq:PtotB1} and through the bandwidth-proportional CPU-side fronthaul power $t_2\nu_{\rm c} M_c B_2^{\min}(B_1)$; hence, the exact ratio structure of Lemma~\ref{lem:bisection} no longer applies verbatim. Instead, the band budget $B_1 + B_2^{\min}(B_1) \leq B$ first determines the maximal feasible $B_1$ by bisection (the left-hand side being strictly increasing in $B_1$), and the resulting one-dimensional EE along the curve $B_1 \mapsto \big( B_1, B_2^{\min}(B_1) \big)$ is then maximized by a second bisection on the sign of its derivative, exploiting its unimodality. 

\subsection{Step 3: Fronthaul Bandwidth and Resolutions $(B_2, \{b_l\})$}
\label{subsec:step3}

We now update the fronthaul bandwidth jointly with the integer resolutions for fixed $(t_1, t_2, B_1,\{\bar{p}_l\})$. A brute-force search over $\{1, \ldots, b_{\max}\}^L$ is exponential in $L$; the following monotonicity property removes this bottleneck.

\begin{proposition}
\label{prop:saturating}
Fix $(t_1, t_2, B_1, B_2)$ and the fronthaul powers $\{\bar{p}_l\}$. Then, the EE is nondecreasing in every resolution $b_l$ over the feasible set of \eqref{eq:c_fronthaul}. Consequently, the optimal resolution of each AP is the largest fronthaul-feasible one.
\end{proposition}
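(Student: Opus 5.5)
With $(t_1,t_2,B_1,B_2)$ and $\{\bar p_l\}$ held fixed, I would prove the claim by showing that the numerator of the EE in \eqref{eq:EE} is nondecreasing in $b_l$ while the part of the denominator that depends on $b_l$ moves in a way that cannot lower the ratio. I would compare two feasible points that differ only in $b_l$ (with $b_l\ge 1$, so the active set $\mathcal{A}$ does not change) and argue as follows.

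\emph{Numerator.} Raising $b_l$ lowers $\eta_l$. This holds both in the high-rate formula \eqref{eq:distortion_factor} and in the tabulated Lloyd--Max values, which decrease in $b$. Hence $\eta_l/(1-\eta_l)$ decreases, and so does the distortion covariance $\vect{C}_{\vect{e},l}$ in \eqref{eq:distortion_cov}. The pilot resolution is fixed, so by the footnote in Section~\ref{subsec:channel_estimation} the channel estimates and $\vect{C}_i$ do not depend on $b_l$. Therefore $\vect{Z}$ decreases in the positive semidefinite order. The SINR \eqref{eq:SINR} evaluated at the MMSE combiner \eqref{eq:mmse_combiner} equals $p_k\hat{\vect{h}}_k^{\Htran}(\sum_{i\neq k}p_i\hat{\vect{h}}_i\hat{\vect{h}}_i^{\Htran}+\vect{Z})^{-1}\hat{\vect{h}}_k$. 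Matrix inversion is operator-monotone decreasing, so this SINR is nonincreasing in $\vect{Z}$ realization by realization. Consequently every $\mathsf{SE}_k$ is nondecreasing in $b_l$, and with $t_1,B_1$ fixed so is every $R_k$ and the sum $S=\sum_k R_k$. This also keeps the QoS constraint \eqref{eq:c_qos} satisfied.

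\emph{Denominator.} The powers $\bar p_l$ are fixed in this block. So the only terms in $P_{\rm tot}$ that depend on $b_l$ enter through $P^{\rm fix}=P_{\rm CPU}+\eta_{\rm dec}S$. The access, fronthaul, and sleep terms do not involve $b_l$ once the active set is fixed: the fronthaul load $F_l$ enters only through the constraint \eqref{eq:c_fronthaul}, not through the power. Writing $P_{\rm tot}=\bar c+\eta_{\rm dec}S$ with $\bar c>0$ independent of $b_l$, I get
\begin{equation*}
\mathsf{EE}=\frac{S}{\bar c+\eta_{\rm dec}S}=\frac{1}{\bar c/S+\eta_{\rm dec}},
\end{equation*}
which is increasing in $S$. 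Monotonicity in each $b_l$ follows directly.

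\emph{Consequence.} Constraint \eqref{eq:c_fronthaul} decouples across APs, because $D_l$ and $\bar p_l$ are fixed for a fixed active set. Each AP's feasible resolutions therefore form a down-closed set $\{1,\dots,\bar b_l\}$, since $F_l$ increases in $b_l$. Coordinatewise monotonicity on this product of intervals then makes $b_l=\bar b_l$ optimal for every $l$ simultaneously. For an AP with no feasible value $b_l\ge1$, the sleep option $b_l=0$ applies, which matches the switch-off mechanism described for Step 3.

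\emph{Main obstacle.} The delicate step is making sure the PSD-monotonicity argument is not broken by dependencies I have not accounted for. The first is the pilot path, which is excluded because $b_{\rm p}$ is fixed. The second is $D_l$, which could change only if the active set changed, and that is ruled out by $b_l\ge1$. The third is the combiner: in the alternating algorithm it may be frozen rather than recomputed, but it is re-optimized at the new $\vect{Z}$, so the SINR can only improve. I would also check that the tabulated distortion factors for $b\le5$ are indeed monotone.
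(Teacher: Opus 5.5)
Your proposal is correct and follows the paper's proof. Raising $b_l$ lowers $\eta_l$, so $\vect{C}_{\vect{e}}$ and hence $\vect{Z}$ decrease in the Loewner order while the estimates stay fixed (because $b_{\rm p}$ is fixed), and operator-monotonicity of the inverse then makes every MMSE-combined $\mathsf{SINR}_k$, and thus the sum throughput, nondecreasing. Your denominator step is actually more careful than the paper's, which asserts $P_{\rm tot}$ is unchanged and overlooks the rate-dependent term $\eta_{\rm dec}\sum_k R_k$ in $P^{\rm fix}$; your rewriting $\mathsf{EE}=1/(\bar c/S+\eta_{\rm dec})$ closes that gap, and your product-of-intervals argument with the active set held fixed ($b_l\ge 1$) makes the ``consequently'' step explicit.
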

\begin{IEEEproof}
Increasing $b_l$ strictly decreases $\eta_l$ and hence the loading factor $\tfrac{\eta_l}{1 - \eta_l}$, so the distortion covariance $\vect{C}_{\vect{e}}$ decreases in the positive-semidefinite (Loewner) order, while the channel estimates are unaffected because the pilot resolution $b_{\rm p}$ is fixed. Writing the SINR under MMSE combining as $\mathsf{SINR}_k = p_k \hat{\vect{h}}_k^{\Htran} \big( \sum_{i \neq k} p_i \hat{\vect{h}}_i \hat{\vect{h}}_i^{\Htran} + \vect{Z} \big)^{-1} \hat{\vect{h}}_k$ and noting that $\vect{A} \succeq \vect{B} \succ \vect{0}$ implies $\vect{A}^{-1} \preceq \vect{B}^{-1}$, every $\mathsf{SINR}_k$ is nondecreasing in $b_l$. The throughput therefore does not decrease, while the denominator $P_{\rm tot}$ is unchanged, since neither the access-side power nor the (fixed) fronthaul powers depend on $\{b_l\}$.
\end{IEEEproof}

Motivated by Proposition~\ref{prop:saturating}, we evaluate the feasibility of each resolution at the fronthaul powers $\{\bar{p}_l\}$ inherited from the previous iteration---consistent with Step 2, the powers are kept fixed while the resolutions are being decided---and define the \emph{constraint-saturating} resolution
\begin{equation}
    b_l^\star(B_2) = \max\big\{ b \leq b_{\max} : t_1 \tilde{F}_l(b) \leq t_2\, \tilde{R}_l^{\rm fh}(B_2) \big\},
    \label{eq:bstar}
\end{equation}
where $\tilde{F}_l(b)$ denotes \eqref{eq:bitrate} evaluated at resolution $b$ and $\tilde{R}_l^{\rm fh}(B_2) = B_2 \log_2\big(1 + \frac{\bar{p}_l}{N_0 B_2 D_l}\big)$ is the fronthaul rate of AP $l$ at the inherited power; the minimum equality powers subsequently assigned by the sweep can then never exceed $\{\bar{p}_l\}$, so feasibility is preserved by construction. Since $\tilde{F}_l(b)$ is affine in $b$, \eqref{eq:bstar} is solved in closed form followed by flooring.

The residual dependence on $B_2$ is one-dimensional but nonconvex, because $b_l^\star(B_2)$ is integer-valued and piecewise constant. We therefore sweep $B_2$ over a finite grid $\mathcal{G} \subset (0, B_2^{\rm ub}]$---where $B_2^{\rm ub} = B$ in the TD mode and $B_2^{\rm ub} = B - B_1$ in the FD mode---and embed the AP switch-off mechanism into the sweep. For each candidate $B_2' \in \mathcal{G}$: (i) the resolutions of the currently active APs are set to $b_l^\star(B_2')$; (ii) if \eqref{eq:bstar} admits no $b \geq 1$ for some AP, that AP is put to sleep in the candidate, i.e., $b_l = 0$, yielding the candidate active set $\mathcal{A}' \subseteq \mathcal{A}$; (iii) the ZF gains $\{D_l\}$ are re-computed over $\mathcal{A}'$---they can only improve, which may in turn increase the resolutions of the surviving APs, so (i)--(iii) are repeated until the active set stabilizes; (iv) the receive combiners are re-derived from the channel estimates of the active APs only, by zeroing the blocks of the sleeping APs; and (v) the EE of the candidate tuple, including the sleep power of the $L - |\mathcal{A}'|$ sleeping APs, is evaluated. The candidate with the largest EE is retained; the current iterate is always included among the candidates, so the step never decreases the EE. If the selected candidate involves switch-offs, they are \emph{irreversible}: once an AP is put to sleep, it remains asleep in all subsequent outer iterations, which prevents on--off oscillations and repeated reconfiguration signaling. 

\subsection{Step 4: Fronthaul Bandwidth--Power Refinement $(B_2, \{\bar{p}_l\})$}
\label{subsec:step4}

With $\{b_l\}$ (and thus the access-side throughput) fixed, the EE is maximized by minimizing $P_{\rm tot}$, of which only the fronthaul-phase part depends on $(B_2, \{\bar{p}_l\})$. At any optimal point, the fronthaul constraints \eqref{eq:c_fronthaul} hold with equality---otherwise some $\bar{p}_l$ could be reduced without affecting the throughput---so the powers obey the channel-inversion rule \eqref{eq:zf_power},
\begin{equation}
    \bar{p}_l(B_2) = N_0 B_2 D_l \left( 2^{\bar{a}_l / B_2} - 1 \right), \qquad \bar{a}_l = \frac{F_l}{t_2},
    \label{eq:pbar_B2}
\end{equation}
and the refinement reduces to a scalar problem in $B_2$. Substituting \eqref{eq:pbar_B2} into $t_2 P^{\rm fh}$, the $B_2$-dependent part of the consumed energy is
\begin{equation}
    h(B_2) = t_2 \sum_{l \in \mathcal{A}} \frac{N_0 D_l}{\kappa_{\rm fh}}\, B_2 \left( 2^{\bar{a}_l / B_2} - 1 \right) + t_2\, \nu_{\rm c} M_c B_2.
    \label{eq:hB2}
\end{equation}

\begin{lemma}
\label{lem:B2convex}
The function $h$ in \eqref{eq:hB2} is strictly convex on $(0, \infty)$ with a unique minimizer, and each $\bar{p}_l(B_2)$ in \eqref{eq:pbar_B2} is strictly decreasing in $B_2$.
\end{lemma}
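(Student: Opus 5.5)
The plan is to reduce everything to the scalar function $\psi_a(x) = x\,(2^{a/x}-1)$ for $x>0$ and a fixed $a>0$. Each summand of $h$ in \eqref{eq:hB2} is a positive multiple $t_2 N_0 D_l/\kappa_{\rm fh}$ of $\psi_{\bar{a}_l}(B_2)$, where $D_l>0$ because it is a diagonal entry of the inverse of the positive-definite Gram matrix $\bar{\vect{G}}^{\Htran}\bar{\vect{G}}$. The last term $t_2\nu_{\rm c}M_cB_2$ is affine. So strict convexity of $h$ follows once $\psi_a$ is shown strictly convex, and the monotonicity of $\bar{p}_l(B_2)=N_0D_l\,\psi_{\bar{a}_l}(B_2)$ follows once $\psi_a$ is shown strictly decreasing.

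\textbf{Derivatives of $\psi_a$.} Write $2^{a/x}=e^{u}$ with $u=a\ln(2)/x>0$. The first derivative is
\begin{equation*}
\psi_a'(x)=e^{u}-1-u\,e^{u}.
\end{equation*}
Its sign is that of $e^{-u}-1+u$, taken negated, and $e^{-u}>1-u$ for $u>0$. Hence $e^{u}(1-u)<1$, so $\psi_a'(x)<0$ and $\psi_a$ is strictly decreasing. For the second derivative, use $du/dx=-u/x$. Differentiating $e^{u}(1-u)-1$ with respect to $u$ gives $-u e^{u}$, so
\begin{equation*}
\psi_a''(x)=\frac{u^2e^{u}}{x}>0.
\end{equation*}
This proves strict convexity, and therefore each $\bar{p}_l(B_2)$ is strictly decreasing.

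\textbf{Existence and uniqueness of the minimizer.} Strict convexity already gives at most one minimizer, so it remains to show that one exists, which is the only delicate point. I would check the limits of $h'$. The derivative is
\begin{equation*}
h'(B_2)=\sum_l c_l\,\psi'_{\bar{a}_l}(B_2)+t_2\nu_{\rm c}M_c, \qquad c_l>0.
\end{equation*}
As $B_2\to0^+$ we have $u\to\infty$, so $\psi_a'\to-\infty$ and $h'\to-\infty$. As $B_2\to\infty$ we have $u\to0$, and the expansion $e^{u}(1-u)-1=-u^2/2+O(u^3)$ shows $\psi_a'\to0^-$. Thus $h'\to t_2\nu_{\rm c}M_c$.

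If $\nu_{\rm c}M_c>0$, then $h'$ is continuous and strictly increasing, and it changes sign exactly once. Its unique root is therefore the unique minimizer, and it can be found by bisection.

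The main obstacle is the degenerate case $\nu_{\rm c}=0$ (or $M_c=0$). There $h$ is strictly decreasing and the infimum is approached only as $B_2\to\infty$. I would state the lemma under $\nu_{\rm c}M_c>0$, and otherwise interpret the minimizer on the feasible interval $(0,B_2^{\rm ub}]$. On that interval the minimizer is $\min\{B_\star,B_2^{\rm ub}\}$, where $B_\star$ is the root of $h'$, consistent with Lemma~\ref{lem:bisection}. I would also note that feasibility requires $\bar{p}_l(B_2)\le P_{\rm fh}^{\max}$. Because each $\bar{p}_l$ is decreasing, this constraint only imposes a lower bound on $B_2$, so it truncates the interval from the left and does not affect uniqueness.
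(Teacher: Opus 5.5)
Your proof is correct and takes essentially the same route as the paper. Both compute the first derivative $e^{u}(1-u)-1<0$ and the second derivative $u^2e^{u}/B_2>0$ of $B_2(2^{\bar a_l/B_2}-1)$, and both locate the unique minimizer from the limits $h'(0^+)=-\infty$ and $h'(\infty)=t_2\nu_{\rm c}M_c$; the paper also notes convexity via the perspective of $u\mapsto 2^u-1$, and your caveat that $\nu_{\rm c}M_c>0$ is needed just makes explicit an assumption the paper uses implicitly.
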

\begin{IEEEproof}
The map $B_2 \mapsto B_2 \big( 2^{\bar{a}_l / B_2} - 1 \big)$ is the perspective of the convex function $u \mapsto 2^{u} - 1$ evaluated at $u = \bar{a}_l$, hence convex; its second derivative $\frac{\bar{a}_l^2 \ln^2 (2)}{B_2^3}\, 2^{\bar{a}_l / B_2} > 0$ shows strict convexity, and adding the linear term preserves it. Its first derivative is $2^{\bar{a}_l / B_2} \big( 1 - \frac{\bar{a}_l \ln (2)}{B_2} \big) - 1 < 0$ for all $B_2 > 0$, by the inequality $e^{y} (1 - y) < 1$ for $y = \frac{\bar{a}_l \ln(2)}{B_2} > 0$; multiplying by $N_0 D_l > 0$ shows that $\bar{p}_l(B_2)$ is strictly decreasing. Finally, $h'(B_2) \to -\infty$ as $B_2 \to 0^+$ and $h'(B_2) \to t_2 \nu_{\rm c} M_c > 0$ as $B_2 \to \infty$, so the strictly increasing derivative
\begin{equation}
    h'(B_2) = t_2 \sum_{l \in \mathcal{A}} \frac{N_0 D_l}{\kappa_{\rm fh}} \left[ 2^{\frac{\bar{a}_l}{B_2}} \left( 1 - \frac{\bar{a}_l \ln 2}{B_2} \right) - 1 \right] + t_2 \nu_{\rm c} M_c
    \label{eq:hprime}
\end{equation}
has a unique root, which is the unique minimizer of $h$.
\end{IEEEproof}

By Lemma~\ref{lem:B2convex}, the minimizer of $h$ is found by bisection on $h'$, and the per-AP power cap defines a lower bound $B_2^{\min}$: since each $\bar{p}_l(B_2)$ is strictly decreasing, $B_2^{\min}$ is the smallest bandwidth for which $\max_l \bar{p}_l(B_2) \leq P_{\rm fh}^{\max}$, also found by bisection. The refined bandwidth is the projection of the root of $h'$ onto $[B_2^{\min}, B_2^{\rm ub}]$, and the fronthaul powers are set to the equality values \eqref{eq:pbar_B2} at the refined $B_2$. This step keeps the throughput unchanged and never increases the consumed energy, so it never decreases the EE.

\vspace{-2mm}
\subsection{Overall Algorithm and Convergence}
\label{subsec:overall}

The complete procedure is summarized in Algorithm~\ref{alg:ao}. To guarantee monotonicity despite the frozen-statistics conventions adopted within the iterations, a \emph{safeguarded acceptance rule} is applied to \emph{every} block update (Steps 1--4): the EE of the proposed update is evaluated---under the within-iteration convention of frozen estimation statistics and, in Steps 1--2, frozen combiners---and the update is accepted only if it does not decrease the EE; otherwise, the previous value of the corresponding block is retained. In Step 3, the rule is realized by always keeping the incumbent among the sweep candidates. At the end of every outer iteration, the channel estimates, their error statistics, and the receive combiners are re-computed at the updated $B_1$, and the EE is re-evaluated accordingly.

\begin{algorithm}[t]
\caption{Alternating EE maximization for problem \eqref{eq:problem}}
\label{alg:ao}
\begin{algorithmic}[1]
\REQUIRE Operating mode (TD or FD), statistics $\{\vect{R}_{lk}\}$, fronthaul gains $\{D_l\}$, power-model parameters, grid $\mathcal{G} \subset (0, B_2^{\rm ub}]$, tolerances $\epsilon, \epsilon_B > 0$
\STATE $\mathcal{A} \leftarrow \{1, \ldots, L\}$; $(B_1, B_2) \leftarrow (B, B)$ for TD or $(B/2, B/2)$ for FD; $\bar{p}_l \leftarrow P_{\rm fh}^{\max}$, $\forall l$; compute estimates and combiners; $b_l \leftarrow b_l^\star(B_2)$ via \eqref{eq:bstar}, $\forall l$; $n \leftarrow 0$; $\mathsf{EE}^{(0)} \leftarrow 0$
\REPEAT
    \STATE $n \leftarrow n + 1$
    \STATE \textbf{(Step 1)} $\omega \leftarrow \max_{l \in \mathcal{A}} \tilde{F}_l / \tilde{R}_l^{\rm fh}$; \ $(t_1, t_2) \leftarrow$ \eqref{eq:tsplit} for TD, or \eqref{eq:tsplit_fd} for FD
    \STATE \textbf{(Step 2)} TD: compute the coefficients in \eqref{eq:acd}; $B_1 \leftarrow \min\{ B_\phi, B \}$, where $B_\phi$ is the root of $\phi$ in \eqref{eq:phi}, found by bisection with tolerance $\epsilon_B$.\\ FD: jointly update $(B_1, B_2)$ along $B_2^{\min}(B_1)$ in \eqref{eq:B2min} by bisection, subject to $B_1 + B_2^{\min}(B_1) \leq B$, with the fronthaul powers kept fixed
    \STATE \textbf{(Step 3)} sweep the fronthaul bandwidth over the candidate grid:
    \FORALL{$B_2' \in \mathcal{G} \cup \{ B_2 \}$}
        \STATE $b_l' \leftarrow b_l^\star(B_2')$ via \eqref{eq:bstar}, $\forall l \in \mathcal{A}$; put APs with no feasible $b \geq 1$ to sleep ($b_l' \leftarrow 0$), yielding $\mathcal{A}' \subseteq \mathcal{A}$; re-compute $\{D_l\}$ over $\mathcal{A}'$ and repeat until $\mathcal{A}'$ stabilizes; re-derive combiners over $\mathcal{A}'$
        \STATE  Evaluate the EE of $\big(B_2', \{b_l'\}, \mathcal{A}'\big)$, including the sleep power
    \ENDFOR
    \STATE $(B_2, \{b_l\}, \mathcal{A}) \leftarrow$ the candidate with the largest EE (selected switch-offs are irreversible)
    \STATE \textbf{(Step 4)} $B_2 \leftarrow \Pi_{[B_2^{\min},\, B_2^{\rm ub}]}\big( \text{root of } h' \text{ in } \eqref{eq:hprime} \big)$; \ $\bar{p}_l \leftarrow \bar{p}_l(B_2)$ via \eqref{eq:pbar_B2}, $\forall l \in \mathcal{A}$
    \STATE Re-compute the estimates and combiners at the updated $B_1$; evaluate $\mathsf{EE}^{(n)}$ \ (every step above is accepted under the safeguard rule)
\UNTIL{$\big( \mathsf{EE}^{(n)} - \mathsf{EE}^{(n-1)} \big) / \mathsf{EE}^{(n-1)} \leq \epsilon$}
\ENSURE $(t_1, t_2, B_1, B_2, \{b_l\}, \{\bar{p}_l\}, \mathcal{A})$
\end{algorithmic}
\end{algorithm}

\begin{proposition}
\label{prop:convergence}
The sequence $\{\mathsf{EE}^{(n)}\}$ generated by Algorithm~\ref{alg:ao} is nondecreasing and convergent.
\end{proposition}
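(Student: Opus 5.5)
The argument has two parts: monotonicity, which comes from the safeguarded acceptance rule, and convergence, which comes from showing that the EE is bounded above over the feasible set of \eqref{eq:problem}.

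\textbf{Monotonicity.} Within an outer iteration $n$, let $\mathsf{EE}^{(n-1)}$ be the value evaluated at the end of iteration $n-1$. At that point the estimates and combiners are exact at the current $B_1$, so the frozen quantities used inside iteration $n$ coincide with the exact ones at the starting point. Each of Steps~1--4 is accepted only if the EE under the within-iteration convention does not decrease; otherwise the previous block value is kept. In Step~3 the incumbent $B_2$ is always among the candidates. The per-block results give the non-decrease directly: Proposition~\ref{prop:tsplit}, Lemma~\ref{lem:bisection}, Proposition~\ref{prop:saturating} with the sweep, and Lemma~\ref{lem:B2convex}. Chaining the four inequalities yields a non-decreasing sequence of surrogate EE values within the iteration.

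The delicate point is the final re-computation of estimates and combiners at the updated $B_1$. Here I will argue as follows.
\begin{itemize}
\item The MMSE combiner \eqref{eq:mmse_combiner} maximizes every $\mathsf{SINR}_k$ for the given estimates. Re-deriving it can therefore only increase the throughput relative to the frozen combiner.
\item The safeguard is also applied to this last update. The paper states that every update is accepted only if it does not decrease the EE, so if the refreshed EE were lower, the previous $B_1$ (and its statistics) would be retained.
\end{itemize}
Hence $\mathsf{EE}^{(n)} \ge \mathsf{EE}^{(n-1)}$.

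Irreversible switch-offs cause no problem. They only shrink the feasible region for later iterations; they never alter an already accepted value.

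\textbf{Boundedness.} I will bound the numerator and denominator separately.
\begin{itemize}
\item \emph{Numerator.} $t_1\le 1$ and $B_1\le B$. Each $\mathsf{SINR}_k$ is finite because $\vect{Z}\succeq \sigma_1^2\vect{I}_M\succ\vect{0}$. By Cauchy--Schwarz, $\mathsf{SINR}_k\le p_k\hat{\vect{h}}_k^{\Htran}\vect{Z}^{-1}\hat{\vect{h}}_k$, and its expectation is finite since the estimates are Gaussian with finite covariance. In the sample-average form \eqref{eq:gB1} the bound is immediate. Hence $\sum_k R_k\le R_{\max}<\infty$.
\item \emph{Denominator.} It satisfies $P_{\rm tot}\ge P_{\rm CPU}>0$, since every other term is nonnegative.
\end{itemize}
Therefore $\mathsf{EE}\le R_{\max}/P_{\rm CPU}$. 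A nondecreasing sequence bounded above converges by the monotone convergence theorem.

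\textbf{Main obstacle.} The hard step is the end-of-iteration refresh, because the estimation statistics change with $B_1$ in a non-monotone way. I expect to lean on the safeguard rule explicitly covering that refresh rather than proving monotonicity of the exact SE in $B_1$.
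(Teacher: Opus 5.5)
Your skeleton matches the paper's proof: safeguarded block-wise acceptance gives monotonicity, bounded throughput together with $P_{\rm tot}\ge P^{\rm fix}>0$ gives an upper bound, and monotone convergence finishes.

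\textbf{The end-of-iteration refresh.} The paper's proof does not treat the refresh separately. It attributes monotonicity to the safeguard on Steps~1--4, which is evaluated with frozen statistics. You are right that the refresh is the one update this rule does not cover, but your patch does not close the hole.

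Your MMSE bullet only compares combiners for \emph{fixed} estimates. The refresh also replaces $\hat{\vect{h}}_{lk}$ and $\vect{C}_{lk}$, which depend on $B_1$ through $\sigma_1^2=N_0B_1$ and $\vect{C}^{\rm p}_{\vect{e},l}$ in $\bm{\Psi}_{lk}$. When Step~2 enlarges $B_1$, the pilot SNR drops, so the refreshed EE can fall below the frozen value that Step~2 accepted. MMSE optimality says nothing about this.

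Your second bullet adds a rule that the paper states only for Steps~1--4. Even granting it, restoring just the previous $B_1$ and its statistics is not enough, because Steps~3--4 have already re-chosen $(B_2,\{b_l\},\mathcal{A},\{\bar{p}_l\})$ for the new $B_1$.
\begin{itemize}
\item If $B_1$ had decreased, reinstating the larger old value raises $F_l$, while the $\bar{p}_l$ satisfy \eqref{eq:c_fronthaul} with equality. The hybrid point is therefore infeasible.
\item If $B_1$ had increased, APs may have been (irreversibly) switched off or resolutions lowered only because of the new $B_1$. The hybrid point then has no guaranteed relation to $\mathsf{EE}^{(n-1)}$.
\end{itemize}
The clean repair is to discard the \emph{entire} outer iteration whenever the refreshed EE is below $\mathsf{EE}^{(n-1)}$. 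Then $\mathsf{EE}^{(n)}=\mathsf{EE}^{(n-1)}$ and the stopping test fires. Alternatively, run the Step-2 acceptance test with re-computed statistics.

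\textbf{Boundedness.} Finiteness of each $\mathsf{SINR}_k$ at a given configuration does not by itself give a uniform $R_{\max}$. Moreover, your bound $\vect{Z}\succeq N_0B_1\vect{I}_M$ degenerates as $B_1\to0$. A uniform bound over the whole feasible set follows from three facts:
\begin{itemize}
\item $\mathsf{SINR}_k\le p_k\|\hat{\vect{h}}_k\|^2/(N_0B_1)$;
\item Jensen's inequality with $\E\{\|\hat{\vect{h}}_k\|^2\}\le\sum_l\tr(\vect{R}_{lk})$, which holds because $\bm{\Psi}_{lk}\succeq p_k\tau_p\vect{R}_{lk}$;
\item $x\mapsto x\log_2(1+c/x)$ is increasing.
\end{itemize}
Together these give $R_k\le B\log_2\big(1+p_k\sum_l\tr(\vect{R}_{lk})/(N_0B)\big)$.
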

\begin{IEEEproof}
Every block update is subject to the safeguarded acceptance rule, so no step decreases the EE: Step 1 maximizes the EE exactly over its block (Proposition~\ref{prop:tsplit}); Step 2 is accepted only if it improves the (frozen-statistics) EE; Step 3 selects the best candidate from a set that contains the incumbent, so it cannot decrease the EE; and Step 4 keeps the throughput unchanged and does not increase the consumed energy (Lemma~\ref{lem:B2convex} and the subsequent projection), so it cannot decrease the EE either. Any switch-off is part of the candidate selected in Step 3, whose EE is at least that of the incumbent, so switch-off events do not break the monotonicity; moreover, since the active set is nonincreasing along the iterations, at most $L - 1$ such events can occur. The sequence $\{\mathsf{EE}^{(n)}\}$ is thus nondecreasing; it is bounded above because the throughput is bounded (finite bandwidth and transmit powers) and $P_{\rm tot} \geq P^{\rm fix} > 0$. A monotone bounded sequence converges.
\end{IEEEproof}

\section{Bussgang-Based Monte Carlo Performance Evaluation}
\label{sec:bussgang}

The AQNM adopted in Sections~\ref{sec:model} and \ref{sec:algorithm} treats the quantization distortion as an additive, input-independent, and temporally uncorrelated impairment. This renders the joint resource allocation tractable, but it only approximates the true quantizer behavior, since the Bussgang gain and the distortion covariance of an actual scalar quantizer do not obey the idealized AQNM statistics \cite{demir2020bussgang}. Therefore, all performance results are obtained by evaluating the configuration $(t_1, t_2, B_1, B_2, \{b_l\}, \{\bar{p}_l\}, \mathcal{A})$ returned by Algorithm~\ref{alg:ao} in an end-to-end manner, where the \emph{exact} Lloyd--Max quantizers are applied to both the pilot and the data samples. The channel estimation and the receive combining are kept unchanged from the AQNM-based design of Sections~\ref{sec:model} and \ref{sec:algorithm}, whereas the achievable rate is computed via the end-to-end Bussgang decomposition of the actually-quantized data path. We emphasize that the Bussgang-based characterization of this section is used \emph{only} for computing the reported rates; the AQNM is employed within the optimization in Algorithm~\ref{alg:ao} as well as in the estimation and combining operations of the evaluated transceiver.

\subsection{Pilot Quantization and AQNM-Based Estimation and Combining}
\label{subsec:bussgang_pilot}

In each Monte Carlo realization, the despread pilot observation of AP $l \in \mathcal{A}$ associated with the pilot of UE $k$, $\vect{z}_{lk} = \sum_{i \in \mathcal{P}_k} \sqrt{p_i \tau_p} \vect{h}_{li} + \tilde{\vect{n}}_{lk}$, is passed through the \emph{actual} Lloyd--Max quantizer with the fixed pilot resolution $b_{\rm p}$, applied separately to the real and imaginary parts of each entry, yielding $\tilde{\vect{z}}_{lk,{\rm Q}} = \mathcal{Q}_l^{\rm p}( \vect{z}_{lk} )$.

Importantly, the channel estimation and the receive combining are kept \emph{exactly as in the AQNM-based design}: the CPU applies the quantization-aware LMMSE estimator of Section~\ref{subsec:channel_estimation} to the actually-quantized observation, i.e.,
\begin{equation}
    \hat{\vect{h}}_{lk} = \sqrt{p_k \tau_p}\, \vect{R}_{lk}\, \bm{\Psi}_{lk}^{-1}\, \tilde{\vect{z}}_{lk,{\rm Q}},
    \label{eq:aqnm_estimate_eval}
\end{equation}
with $\bm{\Psi}_{lk}$ computed from the AQNM statistics as in \eqref{eq:lmmse}, and the receive combiners $\{\vect{v}_k\}$ are obtained from these estimates via the quantization-aware MMSE rule \eqref{eq:mmse_combiner} with the AQNM distortion covariances, exactly as within Algorithm~\ref{alg:ao}. This choice is deliberately model-mismatched: the CPU operates with the tractable AQNM statistics, while the pilot and data samples have traversed the true nonlinear quantizers. The evaluation thus measures the end-to-end performance that the AQNM-designed transceiver \emph{actually delivers}, rather than the performance of an idealized Bussgang-optimal receiver.

\subsection{End-to-End Bussgang Model for the Quantized Data Path}
\label{subsec:bussgang_data}

During the data transmission, the exact scalar quantizer of AP $l \in \mathcal{A}$, operating at the optimized resolution $b_l$, is applied to $\vect{y}_l$, and the CPU descales the output by $1/(1-\eta_l)$ as in Section~\ref{subsec:quantization}; stacking over the active APs yields the observation $\tilde{\vect{y}}_{\rm B}$, a deterministic nonlinear function of $\vect{y} = \sum_{i=1}^{K} \sqrt{p_i}\, \vect{h}_i s_i + \vect{n}$. We form the Bussgang decomposition of the \emph{concatenated} observation directly with respect to the symbol vector $\vect{s} = [s_1, \ldots, s_K]^{\Ttran}$:
\begin{equation}
    \tilde{\vect{y}}_{\rm B} = \vect{F} \vect{s} + \vect{d}, \qquad
    \vect{F} = \E\big\{ \tilde{\vect{y}}_{\rm B}\, \vect{s}^{\Htran} \,\big|\, \vect{H} \big\},
    \label{eq:bussgang_observation}
\end{equation}
where $\vect{F} \in \mathbb{C}^{L_{\mathcal{A}} N \times K}$ is the \emph{end-to-end effective channel matrix}, subsuming the transmit powers, the true channels, and the per-AP quantizer gains. The residual $\vect{d} = \tilde{\vect{y}}_{\rm B} - \vect{F} \vect{s}$ satisfies $\E\{ \vect{d}\, \vect{s}^{\Htran} \,|\, \vect{H} \} = \vect{0}$ \emph{by construction}. The residual collects the quantizer-processed thermal noise \emph{and} the quantization distortions of \emph{all} active APs, with the conditional covariance
\begin{equation}
    \vect{C}_{\vect{d}} = \E\big\{ \tilde{\vect{y}}_{\rm B} \tilde{\vect{y}}_{\rm B}^{\Htran} \,\big|\, \vect{H} \big\} - \vect{F} \vect{F}^{\Htran},
    \label{eq:Cd}
\end{equation}
which retains all inter-AP distortion cross-correlations without any block-diagonal approximation. Since $\vect{F}$ and $\vect{C}_{\vect{d}}$ do not admit closed-form expressions for arbitrary quantizer levels, they are estimated empirically via sample averaging with the actual Lloyd--Max quantizers.

\subsection{Bussgang-Based Ergodic Spectral and Energy Efficiency}
\label{subsec:bussgang_rate}

Using the AQNM-based channel estimates and combiners of Section~\ref{subsec:bussgang_pilot}, the CPU detects $\hat{s}_k = \vect{v}_k^{\Htran} \tilde{\vect{y}}_{\rm B}$, where $\tilde{\vect{y}}_{\rm B}$ is the actually-quantized observation in \eqref{eq:bussgang_observation}. Following the use-and-then-forget (UatF) bounding technique \cite{demir2021foundations}, UE $k$ relies only on the average effective channel $\mu_k = \E\{ \vect{v}_k^{\Htran} \vect{f}_k \}$ for coherent detection, where $\vect{f}_k$ denotes the $k$th column of $\vect{F}$, while all the remaining randomness is treated as uncorrelated effective noise. An achievable ergodic SINR of UE $k$ is then
\begin{equation}
    \gamma_{k,{\rm B}} = \frac{\left| \mu_k \right|^2}{\mathsf{I}_k + \mathsf{DN}_k},
    \label{eq:bussgang_sinr}
\end{equation}
where
\begin{align}
    \mathsf{I}_k &= \sum_{i=1}^{K} \E\left\{ \left| \vect{v}_k^{\Htran} \vect{f}_i \right|^2 \right\} - \left| \mu_k \right|^2,
    \label{eq:bussgang_interf} \\
    \mathsf{DN}_k &= \E\left\{ \vect{v}_k^{\Htran} \vect{C}_{\vect{d}}\, \vect{v}_k \right\}
    \label{eq:bussgang_noise}
\end{align}
denote the beamforming uncertainty plus multi-user interference and the residual distortion-plus-noise power, respectively; the latter includes all inter-AP distortion cross-correlations through \eqref{eq:Cd}. Note that the transmit powers are absorbed into $\vect{F}$, so no explicit power factor appears in \eqref{eq:bussgang_sinr}. The corresponding achievable ergodic SE and effective throughput are
\begin{equation}
    \mathsf{SE}_{k,{\rm B}} = \frac{\tau_u}{\tau_c} \log_2\left( 1 + \gamma_{k,{\rm B}} \right), \qquad
    R_{k,{\rm B}} = t_1 B_1\, \mathsf{SE}_{k,{\rm B}},
    \label{eq:bussgang_rate}
\end{equation}
and the reported EE is $\mathsf{EE}_{\rm B} = \sum_{k=1}^{K} R_{k,{\rm B}} / P_{\rm tot}$, with $P_{\rm tot}$ from \eqref{eq:total_power} evaluated at the configuration returned by Algorithm~\ref{alg:ao}.

The expectations in \eqref{eq:bussgang_sinr}--\eqref{eq:bussgang_noise} are computed via Monte Carlo averaging over independent channel realizations: in each realization, the actual Lloyd--Max quantizers are applied to the despread pilot signals and to the received data samples of all active APs, the channel estimates and combiners are formed by the AQNM-based rules of Section~\ref{subsec:bussgang_pilot}, the end-to-end effective channel $\vect{F}$ and the residual covariance $\vect{C}_{\vect{d}}$ are estimated empirically, and the per-UE effective-channel and residual powers are accumulated. Since the rate expression itself does not rely on the AQNM, the reported EE accounts for the input dependence and correlations of the true quantization errors, while reflecting the exact receiver processing employed by the proposed design.

\vspace{-2mm}
\section{Numerical Results}
\label{sec:results}

This section evaluates the EE gains of the proposed methods relative to benchmark schemes that do not optimize the bandwidth, power, or time allocations. We begin by describing the simulation setup and then present the corresponding numerical results.
\subsection{Simulation Setup}
\label{subsec:setup}

We consider a cell-free massive MIMO network deployed over a $1\,\mathrm{km}\times1\,\mathrm{km}$ square area, with the CPU located at the center. Unless otherwise specified, $L=16$ APs, each equipped with $N=4$ antennas, jointly serve $K=10$ single-antenna UEs, all uniformly distributed over the coverage area. The APs are mounted at a height of $10$\,m, while the CPU is located at a height of $20$\,m.
The access and wireless fronthaul links
operate in the upper mid-band at a carrier frequency of $f_c=7.5$\,GHz and
share a total bandwidth of $B=500$\,MHz; the noise power spectral density
follows from a $-174$\,dBm/Hz thermal floor with a $5$\,dB noise figure. Each
UE transmits with $p_k = 0.2$\,W, the coherence block spans $\tau_c=200$
samples with $\tau_p=K$ orthogonal pilots, and the pilots are quantized with a fixed resolution of
$b_{\rm p}=4$\,bits while the per-AP data resolution is optimized in
$\{0,1,\dots,b_{\max}\}$ with $b_{\max}=12$\,bits per real dimension. The
wireless fronthaul is received by a CPU uniform circular array (UCA) with
$M_c=256$ antennas, and the per-AP fronthaul power budget is
$P_{\mathrm{fh}}^{\max}=10$\,W.

The large-scale fading of the \emph{access} links follows a 3GPP Urban Microcell
non-line-of-sight (NLOS) model \cite[Table 7.4.1-1]{3gpp38901},
\begin{equation}
\beta_{lk}\,[\mathrm{dB}] = -32.4 - 31.9\log_{10}(d_{lk})
                            - 20\log_{10}(f_c) + F_{lk},
\end{equation}
where $d_{lk}$ is the three-dimensional AP--UE distance in meters, $f_c$ is in
GHz, and $F_{lk}\sim\mathcal{N}(0,8.2^2)$ is the shadow fading in dB; the
resulting channels are spatially uncorrelated Rayleigh fading. The
\emph{wireless fronthaul} links are line-of-sight (LOS) with \cite[Table 7.4.1-1]{3gpp38901}
\begin{equation}
\beta^{\mathrm{fh}}_{l}\,[\mathrm{dB}] = -32.4 - 21\log_{10}(d_{l})
                             - 20\log_{10}(f_c) + F^{\mathrm{fh}}_{l},
\end{equation}
with $d_l$ the AP--CPU distance and $F^{\mathrm{fh}}_{l}\sim\mathcal{N}(0,4^2)$;
the effective CPU-side channel of AP $l$ is
$\mathbf{g}_l=\sqrt{N\beta^{\mathrm{fh}}_l}\,\mathbf{a}(\varphi_l,\theta_l)$, where the
factor $N$ is the AP transmit-beamforming array gain and $\mathbf{a}(\cdot)$ is
the UCA steering vector. The power-model parameters are summarized in
Table~\ref{tab:params}.

The reported metric is the network EE in Mbit/Joule,
\emph{evaluated end-to-end with the actual Lloyd--Max quantizers and the
Bussgang bound of Section~\ref{subsec:bussgang_rate}} at the configuration
returned at convergence (using $T_{\mathrm{eval}}=250$ channel realizations
and $n_{\mathrm{MC}}=500$ data symbols per realization). Every point is
averaged over $100$ independent AP/UE drops; setups in which a scheme cannot
support even $b_l=1$ on any AP (all-off, i.e., infeasible) are counted as
$\mathrm{EE}=0$ in the average.

We compare four schemes:
\begin{itemize}
  \item \textbf{Proposed, TD mode} and \textbf{Proposed, FD mode}: Algorithm~1
  run in the TD and FD modes, respectively, jointly
  optimizing the access/fronthaul split, the per-link bandwidths, the per-AP
  quantization resolutions (with sleep-based AP switch-off), and the fronthaul
  powers.
  \item \textbf{Benchmark~1 (TD-fixed)}: an equal, non-optimized split
  $t_1=t_2=\tfrac12$, $B_1=B_2=B$, with full fronthaul power and the
  resolutions set by the fronthaul-constraint-saturating rule (with sleep if
  even $b_l=1$ is infeasible).
  \item \textbf{Benchmark~2 (FD-fixed)}: a fixed split $t_1=t_2=1$,
  $B_1=B_2=B/2$, otherwise identical to Benchmark~1.
\end{itemize}
Since each proposed scheme is \emph{initialized} at its corresponding
fixed-resource benchmark and Algorithm~1 accepts only EE-nondecreasing
updates, the proposed TD and FD schemes are expected to upper-bound
Benchmarks~1 and~2, respectively, throughout.

\begin{table}[t]
\centering
\caption{Default simulation and power-model parameters.}
\vspace{-2mm}
\label{tab:params}
\begin{tabular}{llll}
\hline
Parameter & Value & Parameter & Value \\
\hline
$L$ / $N$ / $K$        & $16$ / $4$ / $10$ & $\mu_{\mathrm{AP}}$        & $0.1$\,W \\
$M_c$                  & $256$             & $D_0$                     & $0.1$\,W \\
Area                   & $1\times1$\,km$^2$ & $\nu$                    & $10^{-10}$\,W/Hz \\
$B$                    & $500$\,MHz        & $\kappa_{\mathrm{UE}}$     & $0.4$ \\
$f_c$                  & $7.5$\,GHz        & $P_{0,\mathrm{UE}}$        & $0.1$\,W \\
Noise figure           & $5$\,dB           & $\kappa_{\mathrm{fh}}$     & $0.4$ \\
$p_k$                  & $0.2$\,W          & $P_{0,\mathrm{fh}}$        & $2$\,W \\
$\tau_c$ / $\tau_p$    & $200$ / $K$       & $\mu_{\mathrm{CPU}}^{\rm fh}$    & $0.1$\,W \\
$b_{\rm p}$ / $b_{\max}$     & $4$ / $12$\,bits  & $D_{0}^{\rm c}$ / $\nu_c$        & $0.1$\,W / $10^{-10}$\,W/Hz \\
$P_{\mathrm{fh}}^{\max}$ & $10$\,W         & $P_{\mathrm{CPU}}$         & $50$\,W \\
$T_{\mathrm{eval}}$    & $250$             & $\eta_{\mathrm{dec}}$      & $10^{-9}$\,W$\cdot$s/bit \\
$n_{\mathrm{MC}}$      & $500$             & $\delta_{\rm s}$ (sleep depth)    & $0.3$ \\
\hline
\end{tabular}
\vspace{-2mm}
\end{table}

\subsection{Results and Discussion}
\label{subsec:discussion}

\begin{figure}[t]
\centering
 \includegraphics[width=\columnwidth]{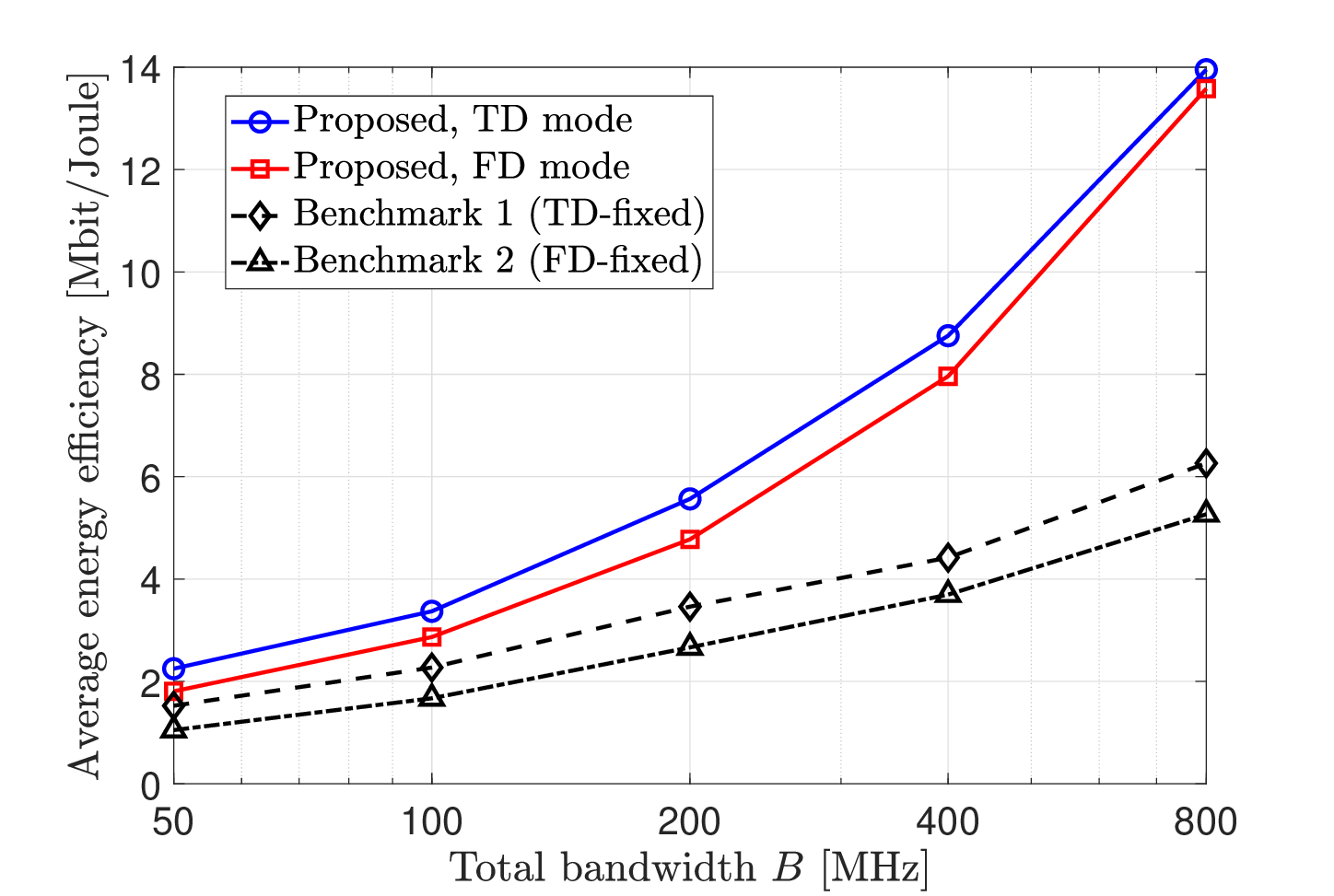}   
 \vspace{-6mm}
\caption{Average EE versus the total bandwidth $B$.}
\label{fig:ee_bw}
\vspace{-4mm}
\end{figure}

\textbf{Impact of the total bandwidth (Fig.~\ref{fig:ee_bw}).}
Fig.~\ref{fig:ee_bw} shows the average EE as the total bandwidth $B$ is
swept from $50$ to $800$\,MHz. For all schemes, the EE increases with $B$, as
the additional bandwidth boosts the access pre-log more than it inflates the
noise and the bandwidth-proportional power at the considered operating points.
More importantly, the two proposed schemes increasingly outperform the
fixed-split benchmarks as $B$ grows: with a larger bandwidth to apportion, the
freedom to jointly adapt the access/fronthaul split and the per-AP resolutions
becomes more valuable, so the gain of the optimization is most pronounced in
the wideband regime. Among the two modes, the TD scheme attains the highest
EE. This is because time division lets the access and fronthaul interfaces
sleep during each other's active phase, so the sleep modes of \emph{both}
links can be fully exploited. In the FD mode, this benefit is only partially
realized: since the band is split between the two simultaneously active links,
the optimization tends to keep the activity fractions (and hence the operating
times $t_1,t_2$) as large as possible in order to preserve the rate, which
leaves little idle time for sleep. Consequently, Proposed-FD lies between
Proposed-TD and the benchmarks, and the same TD-over-FD ordering carries over
to the fixed schemes (Benchmark~1 above Benchmark~2).

\textbf{Impact of the number of UEs (Fig.~\ref{fig:ee_K}).}
Fig.~\ref{fig:ee_K} reports the average EE versus the number of UEs $K$. The
EE increases steadily with $K$ for all schemes: adding UEs strengthens the
spatial-multiplexing gain, and the resulting growth of the sum throughput
outweighs the accompanying increase in pilot overhead, multi-user
interference, and fronthaul load. Remarkably, this trend persists up to
$K=16$, where the network serves $16$ UEs with only $M=LN=64$ total AP
antennas---an appreciably loaded massive MIMO regime---yet the multiplexing
gain still dominates and the EE keeps rising. The proposed schemes retain the
highest EE across the whole range, with the TD mode leading throughout.
Moreover, the advantage of TD over FD \emph{widens} as $K$ increases: a larger
number of UEs raises the aggregate data rate and thus makes the fronthaul more
critical, so TD's ability to use the full band in each phase while letting the
idle interface sleep becomes increasingly valuable, whereas the band-splitting
FD mode---which keeps both links active over long times---is progressively more
constrained.

\begin{figure}[t]
\centering
\includegraphics[width=\columnwidth]{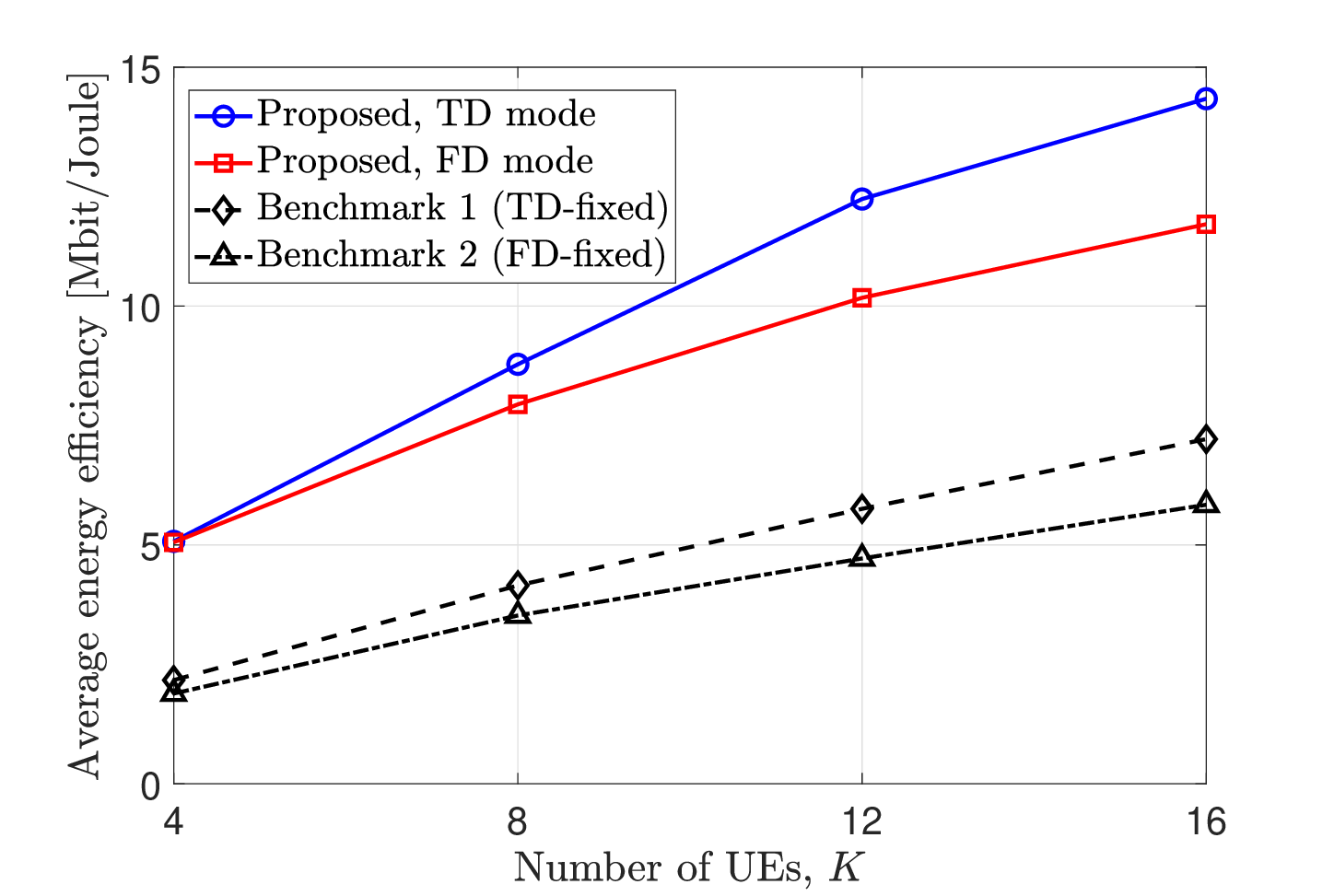}   
\vspace{-6mm}
\caption{Average EE versus the number of UEs $K$.}
\label{fig:ee_K}
\vspace{-4mm}
\end{figure}

\textbf{Impact of the CPU fronthaul array size (Fig.~\ref{fig:ee_Mc}).}
Fig.~\ref{fig:ee_Mc} varies the number of CPU fronthaul antennas $M_c$. For
the proposed schemes the EE rises steeply as $M_c$ increases from $16$ to $32$
and reaches a maximum at an intermediate value (around $M_c=64$--$128$), after
which it declines, revealing an EE-maximizing operating point. This
non-monotonic behavior reflects a clear trade-off: the $M_c$ CPU antennas are
always active, so each of them draws static and bandwidth-proportional power
regardless of the load. A larger array improves the fronthaul channel gain and
the conditioning of the zero-forcing fronthaul inverse, which is highly
beneficial while the fronthaul is the bottleneck; once the fronthaul is no
longer limiting, however, the extra rate it provides exceeds what is needed,
and its growing power cost drives the EE down. This peak is essentially absent
for the fixed benchmarks, whose EE instead increases slowly and monotonically
with $M_c$. A likely reason is that the benchmarks cannot shape the
fronthaul-rate demand through adaptive time and bandwidth allocation; for them,
enlarging $M_c$ mainly improves feasibility---turning setups that would
otherwise be infeasible (and counted as $\mathrm{EE}=0$) into feasible
ones---so their average EE keeps edging upward rather than peaking and falling.
Finally, at small $M_c$ the FD mode slightly outperforms the TD mode before TD
takes over for larger arrays.

\begin{figure}[t]
\centering
\includegraphics[width=\columnwidth]{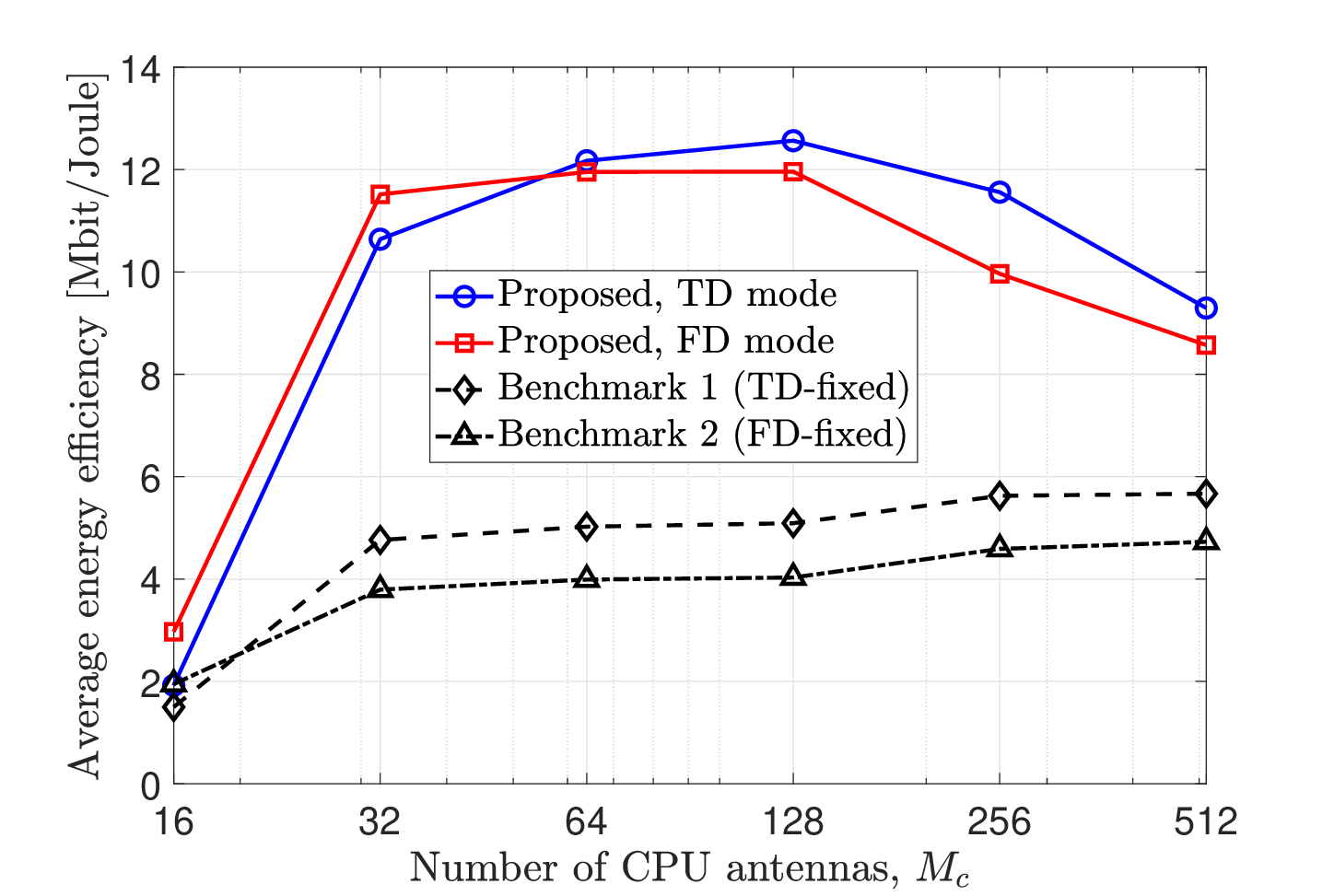}  
\vspace{-6mm}
\caption{Average EE versus the number of CPU fronthaul
antennas $M_c$.}
\label{fig:ee_Mc}
\vspace{-4mm}
\end{figure}

\textbf{Impact of the number of APs (Fig.~\ref{fig:ee_L}).}
Fig.~\ref{fig:ee_L} shows the average EE as the number of APs $L$ grows from
$8$ to $64$ (with $M_c=256$ fixed). The two proposed schemes improve markedly
with $L$: a denser AP layer offers more macro-diversity and shorter access
distances, and---crucially---the sleep-based switch-off gives the optimization
a larger pool of APs from which to activate only the most useful ones while
putting the rest to sleep. Additional APs thus act as a diversity resource for
the proposed design, steadily raising the EE. In stark contrast, the fixed
benchmarks remain almost flat and even decline slightly at $L=64$: since they
cannot adapt the fronthaul load through the time/bandwidth allocation and keep
all supportable APs active at full fronthaul power without energy-driven
sleeping, they cannot effectively harvest the macro-diversity of the extra
APs, whose added circuit and fronthaul power instead offsets the throughput
gain. This contrast underscores that the value of a dense AP deployment is
unlocked only when the resource split and the per-AP activation are jointly
optimized.

\textbf{Impact of the antennas per AP (Fig.~\ref{fig:ee_N}).}
Fig.~\ref{fig:ee_N} varies the number of antennas per AP $N\in\{1,2,4,8\}$
and again reveals an EE-optimal ``sweet spot'' (around $N=4$ for the proposed
schemes). Two opposing effects shape this curve. On the rising side,
increasing $N$ enlarges both the fronthaul transmit-beamforming gain (which
scales with $N$) and the access combining gain, so it improves the fronthaul
and access rates alike and lifts the EE---even though a smaller $N$ would allow
more bits per real dimension at the same fronthaul rate, the array-gain benefit
dominates in this region. Beyond the sweet spot, however, the EE drops sharply:
under split option~8 each AP must quantize $N$ times as many real dimensions,
so at the fixed per-AP fronthaul budget the affordable per-dimension resolution
collapses; in addition, feasibility becomes jeopardized, as more APs can no
longer support even $b_l=1$ and are forced into sleep (contributing
$\mathrm{EE}=0$). These two effects---a steep loss of allocatable bits and a
loss of feasibility---together produce the pronounced decline at $N=8$.
Interestingly, the FD mode is markedly more resilient to this drop than TD,
retaining a substantially higher EE at $N=8$.

\begin{figure}[t]
\centering
\includegraphics[width=\columnwidth]{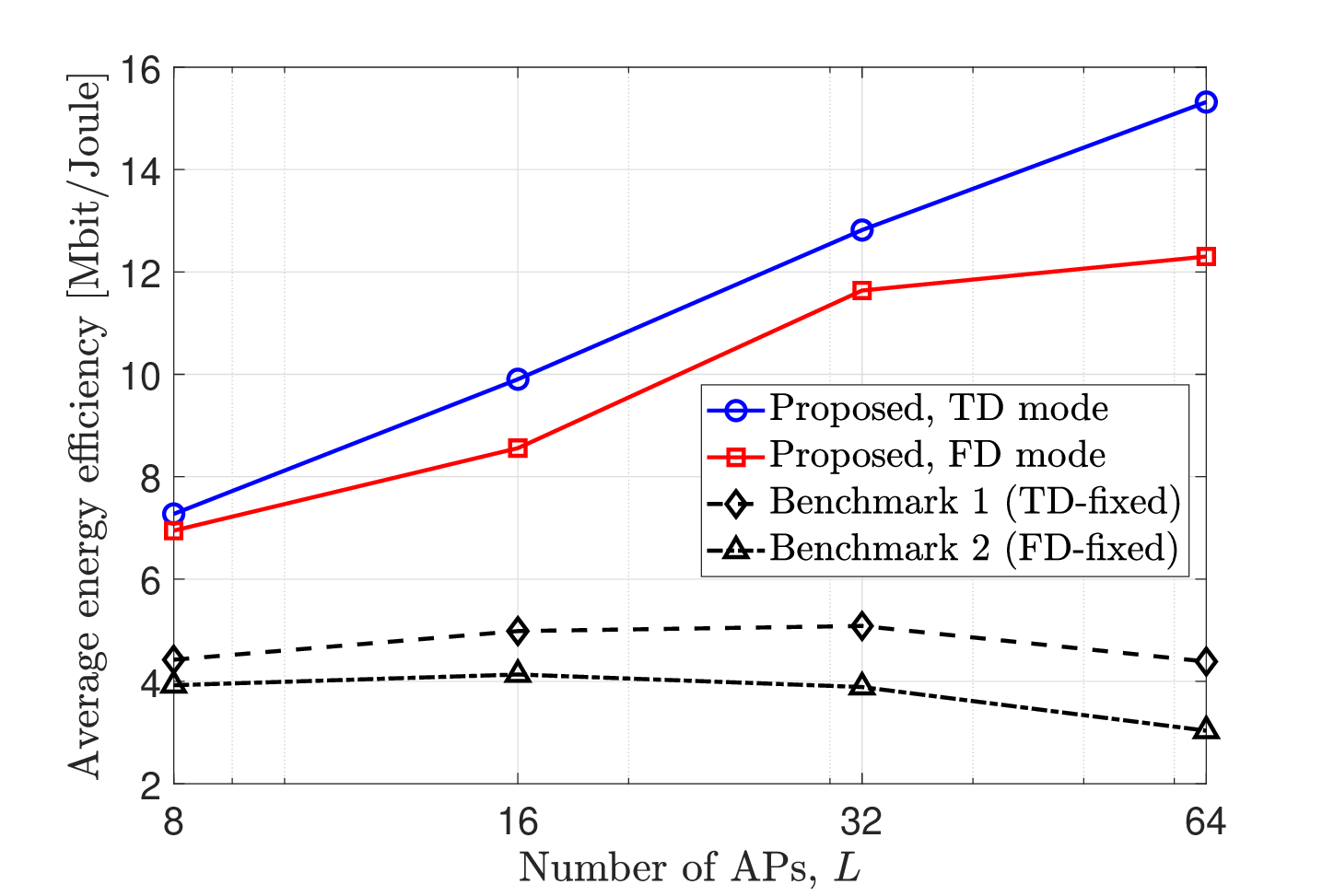}   
\vspace{-6mm}
\caption{Average EE versus the number of APs $L$.}
\label{fig:ee_L}
\vspace{-4mm}
\end{figure}

\begin{figure}[t]
\centering
\includegraphics[width=\columnwidth]{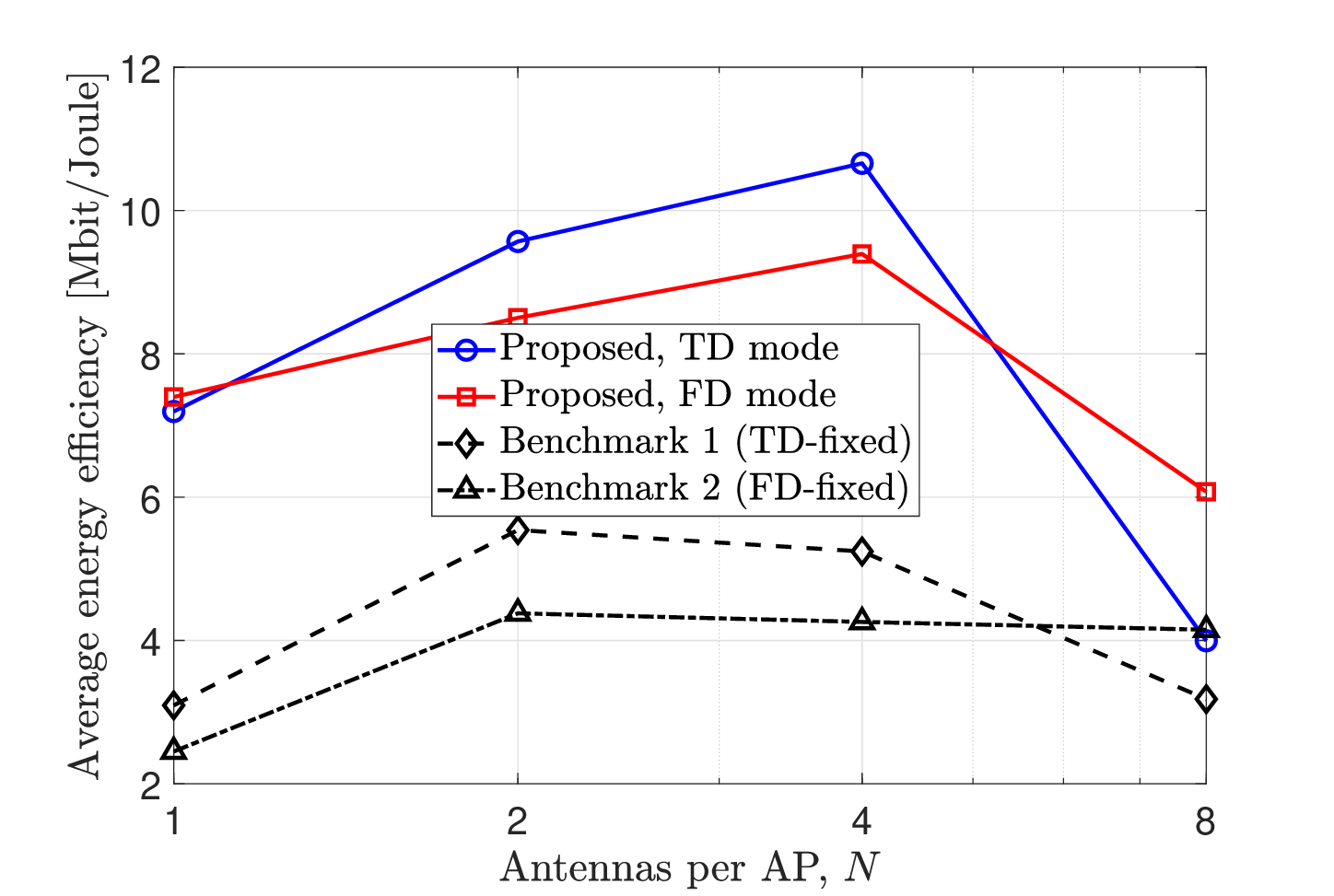}   
\vspace{-6mm}
\caption{Average EE versus the number of antennas per AP $N$.}
\label{fig:ee_N}
\vspace{-4mm}
\end{figure}

\section{Conclusion}
\label{sec:conclusion}
We studied the energy-efficient uplink operation of a cell-free massive MIMO
network with wireless fronthaul, in which the access and fronthaul links share
a common band under an IAF design. We
formulated a network EE-maximization problem that jointly optimizes the
access/fronthaul time split, the per-link bandwidths, the adaptive per-AP
quantization resolutions with sleep-based AP switch-off, and the fronthaul
powers, unified across the TD and FD modes, and we solved the resulting
mixed-integer, nonconvex fractional program with a safeguarded
alternating-optimization algorithm whose per-block updates admit closed-form or
bisection solutions. Validated end-to-end with the actual Lloyd--Max quantizers
and a Bussgang bound over a wideband upper-mid-band deployment, the proposed
schemes substantially outperformed fixed-resource benchmarks. The results
showed that the TD mode is generally the most energy efficient---since time
division lets both the access and fronthaul interfaces exploit their sleep
modes---and that operating the APs at low quantization resolutions together
with selective AP sleep is central to energy efficiency; they also revealed
EE-optimal operating points in the CPU fronthaul array size and in the number
of antennas per AP.

A natural extension of this work is a \emph{space-division} mode for the IAF
problem: rather than separating the access and fronthaul links in time (TD) or
in frequency (FD), SDMA would let both links
operate simultaneously over the full band while being separated in the spatial
domain by the AP and CPU arrays. Jointly designing such spatial separation with
the adaptive quantization and resource allocation---along with a downlink
counterpart and other multiple-access schemes---is a promising direction for
future research.

\section*{Acknowledgment}

During the preparation of this work, the author used ChatGPT (OpenAI, GPT-5)
and Claude (Anthropic, Claude Opus 4.8) to assist with improving the clarity of
LaTeX formatting, language polishing, and manuscript organization. Example
prompts included requests for polishing technical explanations, improving the
presentation of mathematical derivations, and refining the structure of
specific manuscript sections. After using these tools, the author carefully
reviewed, edited, and verified all generated content and takes full
responsibility for the content of the publication.

\bibliographystyle{IEEEtran}
\bibliography{IEEEabrv,refs,refs2}
\end{document}